\newtheorem{corollary}{Corollary}
\newtheorem{definition}{Definition}
\newtheorem{lemma}{Lemma}
\newtheorem{theorem}{Theorem}
\begin{document}

%
% macro to insert comments

\newcommand{\aqed}{A-QED\xspace}
\newcommand{\aqedd}{A-QED$^2$\xspace}

\newcommand{\program}{\ensuremath{P}}
\newcommand{\pc}{\ensuremath{pc}}

\newcommand{\batchsize}{\mathit{b}}
\newcommand{\batchconcat}{\mathit{concat}}
\newcommand{\batchsplit}{\mathit{split}}

\newcommand{\acc}{\mathit{Acc}}
\newcommand{\accoutputmap}[1]{f_{#1}}

\newcommand{\state}{\ensuremath{s}}
\newcommand{\states}{\ensuremath{S}}
\newcommand{\statesctrl}{\ensuremath{\states_C}}
\newcommand{\statesctrlargs}[1]{\ensuremath{\states_{C,#1}}}
\newcommand{\statectrl}{\ensuremath{\state_c}}
\newcommand{\statesmem}{\ensuremath{\states_M}}
\newcommand{\statesmemargs}[1]{\ensuremath{\states_{M,#1}}}
\newcommand{\statemem}{\ensuremath{\state_m}}
\newcommand{\statesrel}{\ensuremath{\states_R}}
\newcommand{\statesrelargs}[1]{\ensuremath{\states_{R,#1}}}
\newcommand{\staterel}{\ensuremath{\state_r}}
\newcommand{\statesnonrel}{\ensuremath{\states_N}}
\newcommand{\statesnonrelargs}[1]{\ensuremath{\states_{N,#1}}}
\newcommand{\statenonrel}{\ensuremath{\state_n}}
\newcommand{\statesinput}{\ensuremath{\states_{\mathit{In}}}}
\newcommand{\statesinputargs}[1]{\ensuremath{\states_{\mathit{In},#1}}}
\newcommand{\stateinput}{\ensuremath{\state_{\mathit{in}}}}
\newcommand{\statesoutput}{\ensuremath{\states_{\mathit{Out}}}}
\newcommand{\statesoutputargs}[1]{\ensuremath{\states_{\mathit{Out,#1}}}}
\newcommand{\stateoutput}{\ensuremath{\state_{\mathit{out}}}}
\newcommand{\initstates}{\ensuremath{\states_\mathit{CI}}}
\newcommand{\initexecstates}{\ensuremath{\states_{\mathit{I}}}}
\newcommand{\initexecstate}{\ensuremath{\state_{\mathit{si}}}}
\newcommand{\initstatesargs}[1]{\ensuremath{\states_{I,#1}}}
\newcommand{\initexecstatesargs}[1]{\ensuremath{\states_{\mathit{SI},#1}}}
\newcommand{\finalstates}{\ensuremath{\states_F}}
\newcommand{\finalstatesargs}[1]{\ensuremath{\states_{F,#1}}}
\newcommand{\initstatectrl}{\ensuremath{\state_{c,I}}}
\newcommand{\initstatectrlargs}[1]{\ensuremath{\state_{c,I,#1}}}
\newcommand{\finalstatectrl}{\ensuremath{\state_{c,F}}}
\newcommand{\finalstatectrlargs}[1]{\ensuremath{\state_{c,F,#1}}}

\newcommand{\getctrlstate}{\ensuremath{\mathit{ctrl}}}
\newcommand{\getmemorystate}{\ensuremath{\mathit{mem}}}
\newcommand{\getoutputstate}{\ensuremath{\mathit{out}}}
\newcommand{\getinputstate}{\ensuremath{\mathit{inp}}}
\newcommand{\getrelstate}{\ensuremath{\mathit{rel}}}
\newcommand{\getnonrelstate}{\ensuremath{\mathit{nrel}}}

\newcommand{\getseqinitstates}{\ensuremath{\mathit{initsym}}}
\newcommand{\getseqfinalstates}{\ensuremath{\mathit{final}}}
\newcommand{\generatedstateseq}{\ensuremath{\mathit{StateSeq}}}
\newcommand{\generatedstateseqcomp}{\ensuremath{\mathit{StateSeqC}}}

\newcommand{\finalstateargs}[1]{\ensuremath{\state_{\mathit{F},#1}}}
\newcommand{\finalstate}{\ensuremath{\state_{\mathit{F}}}}

\newcommand{\initstate}{\ensuremath{\state_\mathit{I}}}
\newcommand{\initstateargs}[1]{\ensuremath{\state_{\mathit{I},#1}}}
\newcommand{\stateseq}{\ensuremath{\boldsymbol{\state}}}
\newcommand{\actions}{\ensuremath{A}}
\newcommand{\action}{\ensuremath{a}}
\newcommand{\actionreset}{\ensuremath{\action_{\mathit{r}}}}
\newcommand{\inputs}{\mathit{I}}
\newcommand{\extrainputs}{\mathit{G}}
\newcommand{\extraoutputs}{\mathit{G}}
\newcommand{\inputinvalid}{\ensuremath{\action_\bot}}
\newcommand{\actioninvalid}{\ensuremath{\action_\bot}}
\newcommand{\inp}{\ensuremath{\mathit{in}}}
\newcommand{\extrainp}{\ensuremath{\mathit{in_{g}}}}
\newcommand{\extrainpargs}[1]{\ensuremath{\mathit{in_{g,#1}}}}
\newcommand{\inpinvalid}{\ensuremath{\mathit{in}_{\bot}}}
\newcommand{\inptwoargs}[2]{\ensuremath{\mathit{in}_{#1,#2}}}
\newcommand{\inpseq}{\ensuremath{\boldsymbol{\inp}}}
\newcommand{\extrainpseq}{\ensuremath{\boldsymbol{\extrainp}}}
\newcommand{\inpseqargs}[1]{\ensuremath{\boldsymbol{\inp_{#1}}}}
\newcommand{\inpseqinvalid}{\ensuremath{\boldsymbol{\inp_{\bot}}}}
\newcommand{\inpbatch}{\inp^{\batchsize}}
\newcommand{\validinpseqfunc}{\ensuremath{\mathit{Cap}}}
\newcommand{\validinpseqfuncinverse}{\ensuremath{\mathit{Cap}}}
\newcommand{\capturedinptobatchindex}{\ensuremath{C_B}}
\newcommand{\validinpseq}{\ensuremath{\boldsymbol{\inpseq^v}}}
\newcommand{\validinpseqargs}[1]{\ensuremath{\boldsymbol{\inpseq_{#1}^v}}}
\newcommand{\validinp}{\ensuremath{\inp^v}}
\newcommand{\validinptwoargs}[2]{\ensuremath{\inp_{#1,#2}^v}}
\newcommand{\dummyinp}{\ensuremath{\inp_\bot}}
\newcommand{\outputs}{\ensuremath{O}}
\newcommand{\outputinvalid}{\ensuremath{\outp_\bot}}
\newcommand{\outp}{\ensuremath{o}}
\newcommand{\outpseq}{\ensuremath{\boldsymbol{\outp}}}

\newcommand{\finalstateseq}{\ensuremath{\boldsymbol{\state_F}}}
\newcommand{\initstateseq}{\ensuremath{\boldsymbol{\state_I}}}

\newcommand{\selectinpbatch}{\ensuremath{\mathit{In}_{b}}}
\newcommand{\selectoutpbatch}{\ensuremath{\mathit{Out}_{b}}}
\newcommand{\selectinpglobal}{\ensuremath{\mathit{In}_{g}}}
\newcommand{\selectoutpglobal}{\ensuremath{\mathit{Out}_{g}}}

\newcommand{\compoundinpset}{\ensuremath{I_{c}}}
\newcommand{\compoundoutpset}{\ensuremath{O_{c}}}

\newcommand{\compoundinp}{\ensuremath{in_{c}}}
\newcommand{\compoundinpargs}[1]{\ensuremath{in_{c,#1}}}
\newcommand{\compoundinpseq}{\ensuremath{\boldsymbol{\compoundinp}}}
\newcommand{\compoundinpseqprimed}{\ensuremath{\boldsymbol{\compoundinp'}}}

\newcommand{\capcompoundinp}{\ensuremath{in_{c}^{v}}}
\newcommand{\capcompoundinpargs}[1]{\ensuremath{in_{c,#1}^{v}}}
\newcommand{\capcompoundinpseq}{\ensuremath{\boldsymbol{\capcompoundinp}}}

\newcommand{\capextrainp}{\ensuremath{in_{g}^{v}}}
\newcommand{\capextrainpargs}[1]{\ensuremath{in_{g,#1}^{v}}}
\newcommand{\capextrainpseq}{\ensuremath{\boldsymbol{\capextrainp}}}

\newcommand{\compoundoutp}{\ensuremath{o_{c}}}
\newcommand{\compoundoutpargs}[1]{\ensuremath{o_{c,#1}}}
\newcommand{\compoundoutpseq}{\ensuremath{\boldsymbol{\compoundoutp}}}

\newcommand{\capcompoundoutp}{\ensuremath{o_{c}^{v}}}
\newcommand{\capcompoundoutpargs}[1]{\ensuremath{o_{c,#1}^{v}}}
\newcommand{\capcompoundoutpseq}{\ensuremath{\boldsymbol{\capcompoundoutp}}}

\newcommand{\extraoutp}{\ensuremath{o_{g}}}
\newcommand{\extraoutpargs}[1]{\ensuremath{o_{g,#1}}}
\newcommand{\extraoutpseq}{\ensuremath{\boldsymbol{\extraoutp}}}

\newcommand{\capextraoutp}{\ensuremath{o_{g}^{v}}}
\newcommand{\capextraoutpargs}[1]{\ensuremath{o_{g,#1}^{v}}}
\newcommand{\capextraoutpseq}{\ensuremath{\boldsymbol{\capextraoutp}}}

% captured states
\newcommand{\validstateseqfunc}{\ensuremath{\mathit{Cap}}}
\newcommand{\validstateseq}{\ensuremath{\boldsymbol{\stateseq^v}}}
\newcommand{\validstate}{\ensuremath{\state^v}}
% retrieve general and relevant parts of a state
%\newcommand{\getgenstate}{\ensuremath{\mathit{gen}}}
%\newcommand{\getrelstate}{\ensuremath{\mathit{rel}}}

\newcommand{\outpseqinvalid}{\ensuremath{\boldsymbol{\outp_{\bot}}}}
\newcommand{\outpseqargs}[1]{\ensuremath{\boldsymbol{\outp_{#1}}}}
\newcommand{\validoutseqfunc}{\ensuremath{\mathit{Cap}}}
\newcommand{\validoutseq}{\ensuremath{\boldsymbol{\outpseq^v}}}
\newcommand{\validoutseqargs}[1]{\ensuremath{\boldsymbol{\outpseq_{#1}^v}}}
\newcommand{\validout}{\ensuremath{\outp^v}}
\newcommand{\validouttwoargs}[2]{\ensuremath{\outp_{#1,#2}^v}}
\newcommand{\transfunc}{\ensuremath{T}}
\newcommand{\transfuncseq}{\ensuremath{\boldsymbol{T}}}
\newcommand{\transfuncargs}[1]{\ensuremath{T_{\mathit{#1}}}}
\newcommand{\outputfunc}{\ensuremath{F}}
\newcommand{\outputfuncaux}{\ensuremath{F_\mathit{aux}}}
\newcommand{\outputfuncargs}[1]{\outputfunc_{#1}}
\newcommand{\datasetin}{\mathit{D}}
\newcommand{\datasetout}{\mathit{Data}_{\mathit{out}}}
\newcommand{\data}{d}
\newcommand{\actiondata}{\mathit{ad}}
\newcommand{\sels}{\mathit{Sel}}
\newcommand{\sel}{\mathit{sel}}
\newcommand{\timestamps}{\mathit{Time}}
\newcommand{\timestamp}{t}
\newcommand{\specstate}{\mathit{Spec}_{S}}
\newcommand{\specout}{\mathit{Spec}}
\newcommand{\inouteq}{\simeq}
\newcommand{\inready}{\mathit{rdin}}
\newcommand{\inreadyseq}{\ensuremath{\boldsymbol{\inready}}}
\newcommand{\hostready}{\mathit{rdh}}
\newcommand{\hostreadyargs}[1]{\hostready_{#1}}
\newcommand{\hostreadyseq}{\ensuremath{\boldsymbol{\hostready}}}
\newcommand{\invalid}{\epsilon}
\newcommand{\invalidseq}{\ensuremath{\boldsymbol{\epsilon}}}
\newcommand{\counter}{\mathit{cnt}}
\newcommand{\bools}{\mathit{B}}

\newcommand{\outstate}{\mathit{Out}}

\newcommand{\concatseq}{\mathrel{\boldsymbol{\cdot}}}

\newcommand{\maxlatency}{\delta_{\mathit{max}}}
\newcommand{\latencyfunc}{\delta}
\newcommand{\outdeltaexpected}{\delta}

\newcommand{\maxwaittime}{\tau}
\newcommand{\minseq}{\mathit{minseq}}

\newcommand{\transfuncmulti}{\transfunc^{*}}
\newcommand{\outputfuncmulti}{\outputfunc^{*}}

\newcommand{\rem}[3]{{\bf\textcolor{#3}{{#2}: {#1}}}}
\newcommand{\cb}[1]{{\footnotesize\rem{#1}{Clark}{magenta}}}

\definecolor{darkgreen}{rgb}{0.18,0.54,0.34}
\definecolor{maroon}{rgb}{0.64,0.16,0.16}
\definecolor{darkpink}{rgb}{0.75,0.25,0.5}

 %
% paper title
% Titles are generally capitalized except for words such as a, an, and, as,
% at, but, by, for, in, nor, of, on, or, the, to and up, which are usually
% not capitalized unless they are the first or last word of the title.
% Linebreaks \\ can be used within to get better formatting as desired.
% Do not put math or special symbols in the title.
\title{Scaling Up Hardware Accelerator Verification using \aqed with Functional Decomposition \thanks{\textbf{This article will appear in the proceedings of Formal Methods in Computer-Aided Design (FMCAD 2021).}}}
%
%
% author names and IEEE memberships
% note positions of commas and nonbreaking spaces ( ~ ) LaTeX will not break
% a structure at a ~ so this keeps an author's name from being broken across
% two lines.
% use \thanks{} to gain access to the first footnote area
% a separate \thanks must be used for each paragraph as LaTeX2e's \thanks
% was not built to handle multiple paragraphs
%

\author{\IEEEauthorblockN{Saranyu Chattopadhyay\IEEEauthorrefmark{1}, Florian Lonsing\IEEEauthorrefmark{1}, Luca Piccolboni\IEEEauthorrefmark{2}, Deepraj Soni\IEEEauthorrefmark{5}, Peng Wei\IEEEauthorrefmark{4}, Xiaofan Zhang\IEEEauthorrefmark{6}, \\ Yuan Zhou\IEEEauthorrefmark{3}, Luca Carloni\IEEEauthorrefmark{2}, Deming Chen\IEEEauthorrefmark{6}, Jason Cong\IEEEauthorrefmark{4}, Ramesh Karri\IEEEauthorrefmark{5}, Zhiru Zhang\IEEEauthorrefmark{3},  Caroline Trippel\IEEEauthorrefmark{1},\\ Clark Barrett\IEEEauthorrefmark{1}, Subhasish Mitra\IEEEauthorrefmark{1}}
\IEEEauthorblockA{\IEEEauthorrefmark{1}Stanford University, \IEEEauthorrefmark{2}Columbia University, \IEEEauthorrefmark{3}Cornell University, \IEEEauthorrefmark{4}University of California, Los Angeles,\\ \IEEEauthorrefmark{5}New York University,  \IEEEauthorrefmark{6}University of Illinois, Urbana-Champaign}}

\maketitle

% FOR REVIEWING PHASE ONLY: add page numbers
\thispagestyle{plain}
\pagestyle{plain}

\begin{abstract}  
Hardware accelerators (\emph{HAs}) are essential building blocks for fast and
energy-efficient computing systems. 
\textit{Accelerator Quick Error Detection (\aqed)} 
is a recent formal technique which uses Bounded Model Checking for pre-silicon verification of HAs. \aqed checks an HA for
\textit{self-consistency}, i.e., whether identical inputs within a sequence of operations always produce the same output. Under modest assumptions, \aqed is both sound and complete. However, as is well-known, large design sizes 
significantly limit the scalability of formal verification, 
including \aqed.  We overcome this scalability challenge
through a new decomposition technique for \aqed,
called \textit{\aqed with Decomposition (\aqedd)}. \aqedd
systematically decomposes an HA into smaller, functional
sub-modules, called \textit{sub-accelerators}, which are then verified independently using \aqed. We
prove completeness of \aqedd; in particular, if the full
HA under verification contains a bug, then \aqedd ensures
detection of that bug during \aqed verification of the corresponding
sub-accelerators. Results on over 100 (buggy) versions of a wide variety of HAs with millions of logic gates demonstrate the effectiveness and practicality of \aqedd.
\end{abstract}

%%%%%%%%%%%%%%%%%%%%%%%%%%%%%%%%%%%%%%%%%%%%%%%%%%%%%%%%%%%%%%%%%
%%%%%%%%%%%%%%%%%%%%%%%%%%%%%%%%%%%%%%%%%%%%%%%%%%%%%%%%%%%%%%%%%

%
\section{Introduction}

Hardware accelerators (\textit{HAs}) are critical building blocks of energy-efficient System-on-Chip (\emph{SoC}) platforms~\cite{cong2014accelerator,carloni2016case,dally2020domain}. Unlike general-purpose processors, HAs implement a set of domain-specific functions (e.g., encryption, 3D Rendering, deep learning inference), referred to as \textit{actions} in this paper, for improved energy and throughput. 
Today's SoCs integrate
dozens of diverse 
HAs (e.g., 40+ HAs in Apple's A12 mobile SoC~\cite{hill:accelerator}).

Unfortunately, the energy and throughput improvements 
enabled by HAs come at the cost of increased design complexity.
Ensuring that a given SoC will behave correctly and reliably requires verifying each and every constituent HA.
Furthermore, HAs must achieve short design-to-deployment timelines in order to meet the needs of a wide variety of evolving applications~\cite{norrie:tpu}. 
Using conventional formal verification techniques to verify HAs faces several key challenges.
Manually crafting extensive design-specific formal properties or full abstract functional specifications can be time-consuming and error-prone~\cite{foster2015trends, huang:ila:2018}.
Moreover, scaling verification to large HAs (with millions of logic gates) is difficult or even infeasible using off-the-shelf formal tools.

A recent formal verification technique targeting HAs, \textit{Accelerator-Quick Error Detection (A-QED)}~\cite{singh:aqed:2020}, overcomes the first challenge above. A-QED is readily applicable for a popular class of HAs: \textit{loosely-coupled accelerators} (\textit{LCAs})~\cite{cota:lca,cong2012architecture} (i.e., HAs that are not integrated as part of a central processing unit (\emph{CPU}), but via an SoC’s network-on-chip or a bus) that are also \textit{non-interfering}. 
Non-interfering HAs produce the same result for a given action independent of their context within a sequence of actions (not to be confused with combinational circuits). In other words, the state of the accelerator does not affect future computations, and each computation is independent from previous computations. In contrast, computations of \emph{interfering} HAs depend on state that is the result of previous computations. A-QED uses Bounded Model Checking (\textit{BMC})~\cite{clarke2001bounded} to symbolically check sequences of actions for \textit{self-consistency}.  Specifically, it checks for \textit{functional consistency (FC)}, the property that identical inputs within a sequence of operations always produce the same outputs. It was shown that FC checks, together with \textit{response bound (\textit{RB})} checks and \textit{single-action correctness (\textit{SAC})} checks, provide a thorough verification technique for non-interfering LCAs~\cite{singh:aqed:2020}. However, despite its success in discovering bugs in moderately-sized HA designs, A-QED suffers from the scalability challenges of formal tools. For example, A-QED (backed by off-the-shelf formal verification tools) times out after 12 hours when run on NVDLA, NVIDIA's deep-learning HA~\cite{nvdla} with approximately 16 million logic gates. 

In this paper, we present a new verification approach called \textit{A-QED with Decomposition (\aqedd)} to address the scalability challenge.  First, we introduce a new, more general formal model of HA execution, which captures both interfering and non-interfering LCAs. We then show how \aqedd can \emph{decompose} a large LCA into smaller \textit{sub-accelerators} in such a way that both FC and RB checks can be directly applied to the sub-accelerators.  Unlike conventional verification approaches based on decomposition, no new properties need to be devised to apply FC and RB to the decomposed sub-accelerators.  Existing decomposition approaches can be leveraged to additionally check SAC of the sub-accelerators.
\aqedd is complementary to verification approaches that rely on design abstraction, which can be used to further improve scalability and to simplify the effort required for SAC checks on decomposed sub-accelerators.

This paper presents both a formal foundation of \aqedd and an empirical evaluation that demonstrates its bug-finding capabilities in practice. 
We prove that \aqed's completeness guarantees~\cite{singh:aqed:2020} continue to hold for \aqedd---if the full HA under verification contains a bug, then \aqedd will detect that bug.
Furthermore, we apply \aqedd to a wide variety of non-interfering LCAs (although our theoretical proofs apply to interfering LCAs as well): 109 different (buggy) versions of large open-source HAs of up to 200 million logic gates (including industrial HAs). Our empirical results focus on 
designs which are described in a high-level language (e.g., C/C++) and then translated to Register-Transfer-Level (\emph{RTL}) designs (e.g., Verilog) using High-Level Synthesis (\emph{HLS}) flows, where appropriate optimizations like pipelining and parallelism are instantiated. Such HLS-based HA design flows are becoming increasingly common in industry. 
However, \aqedd is not restricted to these specific HA design styles. Our empirical results show:

\begin{enumerate}
    \item Off-the-shelf formal tools cannot handle large HAs with millions of logic gates, even when the HAs are expressed as high-level C/C++ designs. In our experiments, A-QED verification of many such HAs times out after 12 hours or runs out of memory. 
    \item \aqedd is broadly applicable to a wide variety of HAs and detects all bugs detected by conventional simulation-based verification. For very large HAs with several million (up to over 200 million) logic gates, \aqedd detects bugs in less than 30 minutes in the worst case and in a few seconds in most cases.
    \item \aqedd is thorough -- it detected all bugs that were detected by conventional (simulation-based) verification techniques. At the same time, \aqedd improves verification effort significantly compared to simulation-based verification -- $\sim5X$ improvement on average, with $\sim9X$ improvement (one person month with \aqedd vs. 9 person months with conventional verification flows) for the large, industrial designs.
\end{enumerate}

The rest of this paper is organized as follows. Sec.~\ref{sec:related_work} presents related work. Sec.~\ref{sec:theory} presents a formal model of the accelerators targeted by \aqedd and our decomposition technique. Sec.~\ref{sec:algorithm} details the \aqedd algorithms. Results are presented in Sec.~\ref{sec:results}, and 
Sec.~\ref{sec:conclusion} concludes.

%%%%%%%%%%%%%%%%%%%%%%%%%%%%%%%%%%%%%%%%%%%%%%%%%%%%%%%%%%%%%%%%%
%%%%%%%%%%%%%%%%%%%%%%%%%%%%%%%%%%%%%%%%%%%%%%%%%%%%%%%%%%%%%%%%%

%%%%%%%%%%%%%%%%%%%%%%%%%%%%%%%%%%%%%%%%%%%%%%%%%%%%%%%%%%%%%%%
%%%%%%%%%%%%%%%%%%%%%%%%%%%%%%%%%%%%%%%%%%%%%%%%%%%%%%%%%%%%%%%%%

%
\section{Related Work}
\label{sec:related_work}

 Conventional formal HA verification, e.g., ~\cite{DBLP:journals/tcad/CampbellLHYGRMC19,
  DBLP:conf/fpga/ChiCCW19,uvm,choi2020flash}, requires a specification, typically in the form of manually written, design-specific properties.  These are then combined with a formal model of the design and handed to a formal tool, which attempts to prove the properties or find counter-examples.
For the verification of latency-insensitive designs, an approach was
developed to automatically derive and check properties from the RTL
synthesized in HLS
flows~\cite{DBLP:journals/corr/abs-2102-06326}. However, these derived
properties are targeted at specific types of bugs.
 
  Large design sizes have always been a challenge for formal techniques, and various approaches to this problem have been proposed.
Among techniques to improve
scalability are
abstraction~\cite{DBLP:journals/jacm/ClarkeGJLV03} and compositional reasoning
(cf.~\cite{DBLP:reference/mc/GiannakopoulouNP18}).  The former removes details of the design, gaining scalability at the cost of possible false errors.  Finding a scalable abstraction that does not generate false errors can be difficult and may be impossible in some cases.  The latter uses
\emph{assume-guarantee} reasoning (e.g., \cite{DBLP:conf/icse/GiannakopoulouPC04,DBLP:conf/tacas/CobleighGP03,DBLP:journals/fmsd/GuptaMF08,DBLP:conf/cav/JhalaM01,DBLP:conf/kbse/ChoDS13,DBLP:journals/tecs/KooM09}) and can be applied to decompose a
large HA into smaller sub-modules. Importantly, the
property $p$ of the HA to be verified must also be decomposed into properties of the
sub-modules. The properties of the sub-modules are verified individually under
certain assumptions about the behavior of the other sub-modules. If
all the properties of the sub-modules hold under the respective
assumptions, then it can be concluded that $p$ holds. 
However, finding the right properties for this decomposition can be very challenging.

Unlike for general compositional reasoning, the two main components of \aqedd (FC and RB) do not require decomposing
properties.  FC, in particular, leverages a universal \emph{self-consistency} property.
Self-consistency expresses the property
that a design is expected to produce the same outputs whenever it is
provided with the same inputs~\cite{DBLP:conf/fmcad/JonesSD96}.
In \aqedd, self-consistency is checked independently for each
sub-module (sub-accelerator in our case). Importantly,
these aspects of \aqedd do not require complex assumptions about the behavior of the
other sub-modules.

It is challenging to establish general \emph{completeness
guarantees} for conventional formal verification 
techniques~\cite{DBLP:conf/charme/KatzGG99,DBLP:conf/fmcad/Claessen07,DBLP:conf/tacas/ChocklerKV01,DBLP:conf/date/GrosseKD07,DBLP:conf/dac/ChocklerKP10},
since completeness depends on the set of properties being
checked.
Designer-guided approaches~\cite{DBLP:journals/pacmpl/ChoiVSCA17,piccolboni2019kairos}
require manual effort.
Automatic generation of   
properties is usually incomplete and depends on abstract design
descriptions~\cite{DBLP:conf/fmcad/KuhneBBB10} or
models~\cite{DBLP:conf/ddecs/SoekenKFFD11}, or analysis of simulation
traces~\cite{DBLP:journals/iet-cdt/RoginKFDR09}, which may be
difficult.
In contrast, we have general completeness results for
\aqedd.

\aqedd
builds on
\aqed~\cite{singh:aqed:2020} and leverages
BMC~\cite{DBLP:conf/tacas/BiereCCZ99,clarke2001bounded}. Similar approaches based on self-consistency have
been successfully applied to other classes of hardware designs, such as processor
verification (as \emph{symbolic quick error detection
(SQED)}~\cite{DBLP:conf/itc/LinSBM15,8355908,DBLP:conf/date/SinghDSSGFSKBEM19,DBLP:conf/iccad/LonsingGMNSSYMB19,DBLP:conf/date/FadihehUNMBSK18,DBLP:conf/date/DevarajegowdaFS20}), as well as to
hardware security~\cite{DBLP:conf/date/FadihehSBMK19,
  DBLP:conf/dac/FadihehMBMSK20,
  DBLP:conf/csfw/BartheDR04,DBLP:conf/fm/BartheCK11,DBLP:conf/uss/AlmeidaBBDE16,DBLP:conf/cav/YangVSGM18}.

%%%%%%%%%%%%%%%%%%%%%%%%%%%%%%%%%%%%%%%%%%%%%%%%%%%%%%%%%%%%%%%
%%%%%%%%%%%%%%%%%%%%%%%%%%%%%%%%%%%%%%%%%%%%%%%%%%%%%%%%%%%%%%%%%

\section{Formal Model and Theoretical Results}
\label{sec:theory}

In this section, we introduce a formal model for HAs, define functional consistency (\emph{FC}), single-action correctness (\emph{SAC}), and responsiveness for the model, and show how these properties provide correctness guarantees.  We then define a notion of functional composition for our model and show how the above properties can be applied in a compositional way.

Our formal model differs from the one in previous work~\cite{singh:aqed:2020} in several important ways.  It allows multiple inputs to be provided simultaneously by explicitly modeling the notion of \emph{input batches}. The HAs we consider are \emph{batch-mode accelerators} as they process input batches and produce output batches. Modeling batches is useful because it more closely matches the interfaces of real HAs. Moreover, input batches enable \emph{intra-batch checks} for FC checking, as we describe below. With intra-batch checks, only one input batch is used for FC checking. Intra-batch checks are more restricted than general FC checks. However, they are easier to set up and run in practice, and they are highly effective at finding bugs, as we demonstrate empirically.   

Our model also explicitly separates control states and memory 
states. Control states represent control-flow information such as,
e.g., program counters in HLS models of HAs. Memory states represent
all other state-holding elements, e.g., program variables.

In our model we distinguish starting and ending control states in
which inputs are provided and the computed outputs are ready,
respectively. This makes the formulation simpler and is also a better
match for HLS designs written in a high-level language, which is our
main target in the experimental evaluation. Further, our 
model enables us to formulate the notion of
\emph{strong FC}, which leads to a complete approach to bug-finding with
only two input batches.

In previous work~\cite{singh:aqed:2020}, a ready-valid protocol was
used to model input/output transactions in RTL designs.
In contrast, our focus is on HLS designs. 
Finally, we distinguish so-called \emph{relevant states},
which are parts of the state space that can affect output values.
This makes it possible to model interfering as well as non-interfering
HAs. In our experiments we focus on non-interfering HAs.

Before presenting formal definitions, we illustrate terminology
informally with an example of a non-interfering batch-mode HA as shown
in Listing~\ref{ex:acc:2} (a slightly modified excerpt of an HA
implementing AES encryption~\cite{cong:bandwidth:2017}).

Function \texttt{fun} of the HA has two sub-accelerators in
lines 8-10 and 13-14 which are identified and verified by \aqedd. Each
sub-accelerator applies a certain operation to all inputs 
in an input batch of HA. In general, the \emph{batch size} of an HA is the number of
inputs in each batch, which is~256 for this HA. 
The first sub-accelerator $\mathit{ACC_1}$ processes an input batch provided via
\texttt{data} and stores its output batch in
\texttt{buf}. The second sub-accelerator $\mathit{ACC_2}$ takes its input batch from
\texttt{buf}, where it also stores the output batch it produces. 
The control state of the HA is only implicitly represented by the
program counter when executing function \texttt{fun}. Variables
\texttt{key} and \texttt{local\_key} are global and determine the
relevant state of the HA on which the result of the encryption
operation depends. The HA is non-interfering because \texttt{key} and
\texttt{local\_key} are left unchanged by $\mathit{ACC_1}$ and $\mathit{ACC_2}$. 
Constants \texttt{BS}, \texttt{UF}, and \texttt{US} are
used in HLS to configure the generated RTL.

\begin{lstlisting}[label={ex:acc:2}, language=C++, caption={HA  Example (AES Encryption)}, breaklines=true,basicstyle=\scriptsize,     keywordstyle=\color{maroon}\bfseries, commentstyle=\color{darkgreen}\textit, stringstyle=\color{darkpink}\ttfamily, numbers=left, escapeinside={(*}{*)}, numbers=left,
xrightmargin=.3em,
xleftmargin=2.5em,frame=single,framexleftmargin=2em, numberstyle=\tiny]
#define BS ((1) << 12) // BUF_SIZE 
#define UF 2 // UNROLL_FACTOR 
#define US BS/UF // UNROLL_SIZE 

void fun(int data[BS], int buf[UF][US], int key[2]){
  int j, k;
  (*\textit{\textcolor{darkgreen}{// ===${ACC_1}$ START===}}*)
  for(j=0; j<UF; j++)  
    for(k = 0; k < BS/UF; k ++) 
      buf[j][k] = *(data + i*BS + j*US + k)^key[0];
  (*\textit{\textcolor{darkgreen}{// ===${ACC_1}$ END===}}*)
  (*\textit{\textcolor{darkgreen}{// ===${ACC_2}$ START===}}*)
    for(j=0; j<UF; j++){
      aes256_encrypt(local_key[j], buf[j]);} 
  (*\textit{\textcolor{darkgreen}{// ===${ACC_2}$ END===}}*)
}
\end{lstlisting}  

\begin{definition}
\label{def:batch:mode:acc}
A \emph{batch-mode hardware accelerator (HA)} is a finite state transition system~\cite{DBLP:journals/cacm/Keller76,DBLP:conf/sagamore/Keller74} $\acc :=
(\batchsize, \actions, \datasetin, \outputs, \states, \initstatectrl, \finalstatectrl, \states_{m,I}, \transfunc)$, where
\begin{itemize}

\item $\batchsize \in \mathbb{N}$ with $\batchsize \geq 1$ is the \emph{batch size},  

\item $\actions$ is a finite set of actions,

\item $\datasetin$ is a finite set of data values, 

\item $\outputs$ is a finite set of outputs,
  
\item $\states\! =\! \statesctrl \times \statesmem$ is the set of
  \emph{states} consisting of~\emph{control~states} $\statesctrl$ and
  \emph{memory states} $\statesmem\! =\! \statesinput\! \times\! \statesoutput\!
  \times\! \statesrel\! \times\! \statesnonrel$, where
  \begin{itemize}
  \item $\statesinput = (\actions \times \datasetin)^{\batchsize}$ are the \emph{input states}, 
  \item $\statesoutput = \outputs^{\batchsize}$ are the \emph{output states}, 
  \item $\statesrel$ are the \emph{relevant states}, and 
  \item $\statesnonrel$ are the \emph{non-relevant states}, 
  \end{itemize}

\item $\initstatectrl \in \statesctrl$ is the unique \emph{initial
  control state}, which defines the set $\initexecstates =
  \{ \initstatectrl \} \times \statesmem$ of \emph{initial
    states},
    
\item $\finalstatectrl \in \statesctrl$ is the unique \emph{final
  control state}, which defines the set $\finalstates = \{
  \finalstatectrl \} \times \statesmem$ of \emph{final states},

\item $\states_{m,I}$ is the set of allowable initial memory states, which defines the set $\initstates\! = \{ \initstatectrl \}\! \times\! \states_{m,I}$ of \emph{concrete initial
    states},

\item and $\transfunc: \states \rightarrow \states$ is the \emph{state transition function}.

\end{itemize}
\end{definition}

When referring to different HAs, e.g., $\acc_0$ and $\acc_1$,
we use subscript notation to identify their components, e.g., $\acc_0
:= (\batchsize_0, \actions_0, \datasetin_0, \outputs_0, \states_0,  \initstatectrlargs{0}, \finalstatectrlargs{0}, \states_{m,I,0}, \transfunc_0)$.

We use
$\boldsymbol{v} = \langle v_1,\ldots,v_{|\boldsymbol{v}|}
\rangle $ to denote a sequence with elements denoted $v_i$ and length
$|\boldsymbol{v}|$.  We concatenate sequences (and for simplicity of notation, 
single elements with sequences) using '$\concatseq$',
e.g., $\boldsymbol{v} = v_1 \concatseq \boldsymbol{v'}$, where
$\boldsymbol{v'} = \langle v_2,\ldots,v_{|\boldsymbol{v}|}\rangle$.
We will sometimes identify a sequence $\boldsymbol{v}$ with the corresponding tuple, and we write $v \in \boldsymbol{v}$ to denote that $v$ appears in
$\boldsymbol{v}$.  We denote the $i$-th element of a tuple $t$ as $t(i)$.

An HA $\acc$ operates on a set $\inputs^{\batchsize}$ of \emph{input
  batches}, where $\batchsize$ is the \emph{batch size} and $\inputs =
\actions \times \datasetin$.  An input batch $\inp \in
\inputs^{\batchsize}$ has $\batchsize$ \emph{batch elements}, each consisting of a pair $(\action,\data)$
containing an action $\action \in \actions$ 
to be executed and data $\data \in \datasetin$ (the data on
which action $\action$ operates). 

A state $\state \in \states$ of $\acc$ with $\state =
(\statectrl, \statemem)$ consists of a control state $\statectrl \in
\statesctrl$ and a memory state $\statemem \in \statesmem$. The
control state $\statectrl$ represents control-flow-related state (e.g.,
the program counter in an execution of a high-level model of $\acc$). In a run of $\acc$, the control state starts at a distinguished
initial state $\initstatectrl$ and ends at a distinguished final state $\finalstatectrl$.

The memory state represents all other state-holding elements of $\acc$ (including, e.g., global variables, local variables, function parameters, and memory elements).  The memory state $\statemem =
(\stateinput, \stateoutput, \staterel, \statenonrel)$ is divided into
four parts.  The first part, $\stateinput \in \statesinput$, contains the input to $\acc$.
More precisely, in a run of $\acc$, the value of $\stateinput$ in the initial state is considered
the input for that run.  Similarly, at the end of a run of $\acc$, $\stateoutput \in \statesoutput$ contains the outputs
for that run (i.e., the values computed by $\acc$ based on the inputs present at the start \nolinebreak of \nolinebreak the \nolinebreak run).

The relevant state $\staterel$ represents those state elements (other than $\stateinput$) that can influence the values of the outputs.  Any part of the state that can affect the output value in at least one execution should be included in the relevant state.  As an example of when this is needed, consider an encryption HA with actions for setting the encryption key and for encrypting data.  The internal state that stores the key is part of the relevant state because it affects the way the output is computed from the input.
The non-relevant state $\statenonrel$ is everything else.
We write
$\getctrlstate(\state)$,
$\getmemorystate(\state)$,
$\getinputstate(\state)$, 
$\getoutputstate(\state)$,
$\getrelstate(\state)$, and
$\getnonrelstate(\state)$
to denote the
components $\statectrl, \statemem, \stateinput$, $\stateoutput$, $\staterel$, and $\statenonrel$, respectively.  We overload the latter four operators to apply to memory states as well, and we lift the notation to sequences of states.

The set $\initexecstates$ of initial states contains all
states resulting from combining a memory state in $\statesmem$ with
the unique initial control state $\initstatectrl$.
The concrete initial states, $\initstates$, are a subset of $\initexecstates$, and essentially represent the reset state(s) of the HA.  They play a role in defining the \emph{reachable} states (see Definition \ref{def:reachable}, below).
The set $\finalstates$ of final states contains all
states resulting from combining a memory state in $\statesmem$ with
the unique final control state $\finalstatectrl$.
Finally, the transition function $T$ defines the successor state for any \nolinebreak given \nolinebreak state in $S$.

Given an input batch $\inp \in \inputs^{\batchsize}$, the HA 
produces an \emph{output batch} $\outp \in \outputs^{\batchsize}$ as follows.
Let $\state_0 \in \initexecstates$ be an initial state with
$\getinputstate(\state_0) = \inp$, and let $\stateseq =
\transfuncseq(\state_0) = \langle \state_1, \ldots, \state_k \rangle$
denote the sequence of $|\stateseq| = k$ \emph{successor states}
generated by the \emph{transition function} $\transfunc$, where
$\state_{i} = \transfunc(\state_{i-1})$ for $1 \leq i \leq k$, such
that $\state_k \in \finalstates$ is a final state (and no earlier states in $\stateseq$ are final states).  We also assume, without loss of generality, that $\getctrlstate(\state_i)\not=\initstatectrl$ for $i>0$.
The final state
$\state_k$ holds the
output batch $\getoutputstate(\state_k) = \outp$ with $\outp \in
\outputs^{\batchsize}$ that is produced for the input batch
$\getinputstate(\state_0) = \inp$. 
Given a sequence $\stateseq$, we write $\getseqinitstates(\stateseq)$ and
$\getseqfinalstates(\stateseq)$ to denote the subsequence of
$\stateseq$ containing all initial and final states that occur in
$\stateseq$, respectively.

Given a sequence of input batches, an HA generates a sequence
of output batches based on concatenating executions for each input batch.

\begin{definition}
  \label{def:successor:states}
Let $\inpseq$ be a sequence of inputs with $n=|\inpseq|$, and let
$\state_0 \in \initexecstates$.  Then, 
$\generatedstateseq(\inpseq, \state_0)$ denotes the sequence of \emph{successor states}
of $\state_0$ that result from executing $\inpseq$, which is defined as follows. 
\begin{itemize}

\item Let $\state_0'$ be the result of replacing $\getinputstate(\state_0)$ with $\inp_1$ in $\state_0$.  Let $\stateseq'$ = $\state_0' \concatseq \transfuncseq(\state_0')$.

\item If $|\inpseq| = 1$ then $\generatedstateseq(\inpseq, \state_0) = \stateseq'$

\item If $|\inpseq| > 1$, then

  \begin{itemize}

  \item let $\state_{\mathit{f}} = \getseqfinalstates(\stateseq')$ (which is unique),
  
  \item let $\state_{\mathit{i}} = (\initstatectrl, \getmemorystate(\state_{\mathit{f}}))$,
  
  \item let $\stateseq'' = \generatedstateseq(\langle \inp_2, \ldots, \inp_n \rangle, \state_{\mathit{i}})$.

  \item Then, $\generatedstateseq(\inpseq, \state_0) = \stateseq' \concatseq \stateseq''$.

  \end{itemize}

\end{itemize}
  
\end{definition}

In Definition~\ref{def:successor:states}, the
state $\state_{\mathit{i}}$ from which each subsequent input batch is executed
is obtained from the final state $\state_{\mathit{f}}$
produced from executing the previous input batch. 
Given an HA $\acc$, we write $\generatedstateseq(\acc,
\inpseq, \state_0)$ to explicitly refer to the successor states of
$\state_0$ generated by $\acc$. If $\acc$ is clear from the context,
we omit it.

\begin{definition}
\label{def:reachable}
A state $\state \in \states$ is \emph{reachable} if $\state \in \initstates$ or if there exists a \emph{concrete
initial state} $\state_0 \in \initstates$ and sequence $\inpseq$ of
input batches such that $\state \in \generatedstateseq(\inpseq,
\state_0)$.  A relevant state $\staterel$ is reachable if $\staterel = \getrelstate(\state)$ for some reachable state $\state$.
\end{definition}

\noindent
Note that the initial states
$\initexecstates$ are not necessarily all reachable.

Next, we define an abstract specification for an HA function.
Note that we use this to define correctness, but one of the features of \aqed is that the specification is not needed for the main verification technique.

\begin{definition}[Abstract Specification]
  \label{def:spec::NEW:MODEL}
For an HA $\acc$, let $\specout: \inputs \times \statesrel
\rightarrow \outputs$ be an \emph{abstract specification function}.
\end{definition}

Definition ~\ref{def:spec::NEW:MODEL} states that the value of an output computed by an HA is completely determined by the corresponding input and the relevant part of the memory state when the HA was started.
Note that the inclusion of the relevant memory state makes the definition general enough to
model interfering HAs.  To model
non-interfering HAs, we can either make the output
dependent on only the input batch, or require that the relevant state
does not change in state transitions.

Based on the abstract specification, we define the \emph{functional
correctness} of an HA in terms of the output batches that are
produced for given input batches as follows.

\begin{definition}[Functional Correctness]
\label{def:corr:NEW:MODEL}
An HA $\acc$ is \emph{functionally correct} with respect to
an abstract specification $\specout$ if, for all concrete initial states
$\state_0 \in \initstates$ and all sequences $\inpseq$ of input
batches, if
\begin{itemize}
\item $\inpseq = \langle \inp_1, \ldots, \inp_n \rangle$,
  
\item $\stateseq = \generatedstateseq(\inpseq, \state_0)$, 

\item $\initstateseq = \getseqinitstates(\stateseq) = \langle \initstateargs{1}, \ldots, \initstateargs{n} \rangle$,
  
\item $\outpseq = \getoutputstate(\getseqfinalstates(\stateseq)) = \langle \outp_1, \ldots, \outp_n \rangle$,
\end{itemize}
then
$\forall\, j \in [1\dots b].\:  
  \outp_n(j) = \specout(\inp_n(j), \getrelstate(\initstateargs{n}))$.
\end{definition}

\noindent
A bug is simply a failure of functional correctness.

As mentioned above, even without a formal specification, we can apply the core technique of \aqed.  To do so, we leverage the concept of \emph{functional consistency}, the notion that under modest assumptions, two identical inputs will always produce the same outputs.

\begin{definition}[Functional Consistency (\emph{FC})]
\label{def:weak:func:cons:NEW:MODEL}
An HA $\acc$ is \emph{functionally consistent} if, for all
concrete initial states $\state_0 \in \initstates$ and for all sequences
$\inpseq$ of input batches, if 

\begin{itemize}

\item $\inpseq = \langle \inp_1, \ldots, \inp_n \rangle$, $\stateseq = \generatedstateseq(\inpseq, \state_0)$, 

\item $\initstateseq = \getseqinitstates(\stateseq) = \langle \initstateargs{1}, \ldots, \initstateargs{n} \rangle$,

\item $\outpseq = \getoutputstate(\getseqfinalstates(\stateseq)) = \langle \outp_1, \ldots, \outp_n \rangle$,
  
\end{itemize}
\hspace*{-.75em}
\begin{tabular}{ll}
then$\!\!\!\!$ & 
 $\forall\, i \in [1,n], j,j' \in [1,b].$ \\
  & $\inp_i(j)\! =\! \inp_{n}(j')\! \wedge\! \getrelstate(\initstateargs{i})\! =\!
  \getrelstate(\initstateargs{n}) \to  \outp_i(j)\! =\! \outp_{n}(j')$.
\end{tabular}
\end{definition}

\noindent
Definition~\ref{def:weak:func:cons:NEW:MODEL} illustrates the need for the \emph{relevant} designation for memory states.  It essentially says that two inputs, even if started at different times and in different batch positions, should produce the same output, as long as the relevant part of the memory is the same when the two inputs are sent in.  The following lemma is straightforward (see the appendix for proofs of this and other results).

\begin{lemma}[Soundness of FC]
\label{lem:sound:NEW:MODEL}
If an HA is functionally correct, then it is functionally consistent.
\end{lemma}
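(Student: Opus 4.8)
The plan is to derive functional consistency directly from functional correctness by expressing both outputs $\outp_i(j)$ and $\outp_n(j')$ appearing in Definition~\ref{def:weak:func:cons:NEW:MODEL} as values of the abstract specification $\specout$, and then showing that these two specification values coincide under the consistency hypotheses. Concretely, assume $\acc$ is functionally correct with respect to some $\specout$. Fix a concrete initial state $\state_0 \in \initstates$, an input sequence $\inpseq = \langle \inp_1, \ldots, \inp_n\rangle$, and set $\stateseq = \generatedstateseq(\inpseq, \state_0)$, $\initstateseq = \getseqinitstates(\stateseq) = \langle \initstateargs{1}, \ldots, \initstateargs{n}\rangle$, and $\outpseq = \getoutputstate(\getseqfinalstates(\stateseq)) = \langle \outp_1, \ldots, \outp_n\rangle$. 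Fix $i \in [1,n]$ and $j,j' \in [1,b]$ and assume the antecedent $\inp_i(j) = \inp_n(j')$ and $\getrelstate(\initstateargs{i}) = \getrelstate(\initstateargs{n})$; the goal is $\outp_i(j) = \outp_n(j')$.

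The crucial preliminary step is a structural \emph{prefix} property of $\generatedstateseq$: the state sequence produced by the length-$i$ prefix $\langle \inp_1, \ldots, \inp_i\rangle$ from $\state_0$ is exactly the initial segment of $\stateseq$ ending with the final state of the $i$-th batch. I would establish this by a straightforward induction on the recursive definition in Definition~\ref{def:successor:states}: the quantities $\state_0'$, $\stateseq' = \state_0' \concatseq \transfuncseq(\state_0')$, and the derived state $\state_{\mathit{i}} = (\initstatectrl, \getmemorystate(\state_{\mathit{f}}))$ depend only on $\state_0$ and $\inp_1$, so they are identical for the full sequence and for any prefix of length $\geq 1$, and the recursion then threads the same memory state into the tail. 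Consequently the initial states $\langle \initstateargs{1}, \ldots, \initstateargs{i}\rangle$ and outputs $\langle \outp_1, \ldots, \outp_i\rangle$ extracted via $\getseqinitstates$ and $\getoutputstate \circ \getseqfinalstates$ from the prefix computation agree with those from the full computation, and in the prefix computation batch $i$ plays the role of the last batch.

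With the prefix property established, I would apply functional correctness (Definition~\ref{def:corr:NEW:MODEL}) twice. Applied to the full sequence $\inpseq$ it yields $\outp_n(j') = \specout(\inp_n(j'), \getrelstate(\initstateargs{n}))$. Applied to the prefix $\langle \inp_1, \ldots, \inp_i\rangle$---whose last batch is batch $i$ with initial state $\initstateargs{i}$ and output $\outp_i$---it yields $\outp_i(j) = \specout(\inp_i(j), \getrelstate(\initstateargs{i}))$. The two consistency hypotheses $\inp_i(j) = \inp_n(j')$ and $\getrelstate(\initstateargs{i}) = \getrelstate(\initstateargs{n})$ make the arguments of the two $\specout$ applications identical, so the specification values are equal and hence $\outp_i(j) = \outp_n(j')$, as required. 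The degenerate case $i = n$ is subsumed: the prefix is then the whole sequence and both applications reduce to the same instance of functional correctness evaluated at positions $j$ and $j'$.

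The main obstacle is the prefix property itself---making precise that a prefix of the input sequence induces a prefix of the generated state sequence with matching per-batch initial states and outputs. Care is needed because $\generatedstateseq$ resets the control component to $\initstatectrl$ while carrying the memory state of one batch's final state into the next batch, so I must verify that the computation for batches $1,\ldots,i$ is wholly determined by $\state_0$ and $\inp_1,\ldots,\inp_i$ and is unaffected by the presence or absence of the tail $\inp_{i+1},\ldots,\inp_n$. This is immediate from $\transfunc$ being a deterministic function and from the purely left-to-right recursion in Definition~\ref{def:successor:states}, but it is the step that must be stated and justified explicitly rather than assumed; once it is in place, the remainder of the argument is a direct double application of functional correctness.
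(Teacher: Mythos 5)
Your proposal is correct and follows essentially the same route as the paper's proof: both $\outp_i(j)$ and $\outp_n(j')$ are rewritten as values of $\specout$ via functional correctness, and the hypotheses $\inp_i(j)=\inp_n(j')$ and $\getrelstate(\initstateargs{i})=\getrelstate(\initstateargs{n})$ then force the two specification values to coincide. The only difference is that you explicitly state and prove the prefix property of $\generatedstateseq$ that justifies applying Definition~\ref{def:corr:NEW:MODEL} (which, as written, constrains only the last batch of a sequence) to the intermediate batch $i$ --- a step the paper's proof leaves implicit when it asserts $\outp_i(j) = \specout(\inp_i(j), \getrelstate(\initstateargs{i}))$ directly, so your version is a more careful rendering of the same argument.
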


Checking FC requires running BMC over multiple iterations of the HA and may be computationally prohibitive for large designs or for large values of $n$.  Often, it is possible to verify a stronger property, which only requires checking consistency across two runs of the HA.

\begin{definition}[Strong FC]
\label{def:strong:fc}
An HA $\acc$ is \emph{strongly functionally consistent} if,
for all reachable initial states $\state_0, \state_0'$ and 
input batches $\inp, \inp'$, if 
\begin{itemize}
  
\item $\stateseq = \generatedstateseq(\langle \inp \rangle,
  \state_0)$, $\stateseq' = \generatedstateseq(\langle \inp' \rangle,
  \state_0')$,
  
\item $\finalstateseq = \getseqfinalstates(\stateseq) = \langle
  \finalstate \rangle$, $\finalstateseq' =
  \getseqfinalstates(\stateseq') = \langle \finalstate' \rangle$,

\item $\outpseq = \getoutputstate(\finalstateseq) = \langle \outp
  \rangle$, $\outpseq' = \getoutputstate(\finalstateseq') = \langle
  \outp' \rangle$,
  
\end{itemize}
\hspace*{-.75em}
\begin{tabular}{ll}
then$\!\!\!\!$ & 
$\forall\, j,j' \in [1,b].$ \\ 
 & $\inp(j) = \inp'(j') \wedge \getrelstate(\state_0) =
  \getrelstate(\state_0') \to  \outp(j) = \outp'(j')$.
\end{tabular}

\end{definition}

\noindent
The main difference between FC and strong FC is that the initial states $\state_0$ and
$\state_0'$ can be any reachable states.  In contrast to that, the initial state $\state_0 \in \initstates$ in the definition of FC is a concrete one. It is easy to see that strong FC implies FC, but the reverse is not true in general. This is because it may not be possible for two reachable initial states $\state_0$ and $\state_0'$ chosen in a strong FC check to both appear in a single sequence of states resulting from executing a sequence of input batches starting in a concrete initial state. Similar to previous work on \aqed for non-batch-mode HAs~\cite{singh:aqed:2020}, FC checking relies on sequences of input batches to reach all reachable states from a concrete initial state. For strong FC checking, on the other hand, two individual input batches are sufficient because the two initial states $\state_0$ and $\state_0'$ can be arbitrarily chosen from the reachable states. Like FC, strong FC is a sound approach.

\begin{lemma}[Soundness of Strong FC]
\label{lem:sound:NEW:MODEL:strong}
If an HA is functionally correct then it is strongly functionally consistent.
\end{lemma}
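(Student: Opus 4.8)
The plan is to reduce strong FC (Definition~\ref{def:strong:fc}) to the hypothesis of functional correctness (Definition~\ref{def:corr:NEW:MODEL}) by pinning down both output batches of a strong-FC instance to the abstract specification $\specout$, and then to close the argument using nothing more than the fact that $\specout$ is a function. The only genuine gap between the two definitions is that functional correctness is stated for the \emph{concrete} initial states $\initstates$, whereas strong FC quantifies over arbitrary \emph{reachable} initial states; bridging this gap is the technical core.

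First I would prove the following intermediate claim: for every reachable initial state $\state_0 \in \initexecstates$ and every input batch $\inp$, if $\stateseq = \generatedstateseq(\langle \inp \rangle, \state_0)$ and $\outp = \getoutputstate(\getseqfinalstates(\stateseq))$, then $\outp(j) = \specout(\inp(j), \getrelstate(\state_0))$ for all $j \in [1,\batchsize]$. I would split on the two cases of Definition~\ref{def:reachable}. If $\state_0 \in \initstates$, functional correctness applies directly to the length-one sequence $\langle \inp \rangle$; since replacing the input component of a state leaves its relevant component unchanged, the relevant part of the (unique) initial state of this run equals $\getrelstate(\state_0)$, and the claim follows at once.

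Otherwise $\state_0$ witnesses reachability through some concrete $\hat{\state}_0 \in \initstates$ and sequence $\boldsymbol{w}$ with $\state_0 \in \generatedstateseq(\boldsymbol{w}, \hat{\state}_0)$. Because intermediate states within a batch never carry the control value $\initstatectrl$ (the WLOG assumption in the model), the only states of this run that can equal $\state_0$ are those that \emph{begin} the execution of some batch, say batch $k$. The heart of the bridging step is then to splice a new sequence $\tilde{\boldsymbol{w}}$ consisting of the first $k-1$ batches of $\boldsymbol{w}$ followed by $\inp$, and to argue that running $\tilde{\boldsymbol{w}}$ from $\hat{\state}_0$ reproduces the states of the original run up through batch $k-1$ verbatim (the batches agree), so that batch $k$ of $\tilde{\boldsymbol{w}}$ starts exactly from $\state_0$ with its input overwritten by $\inp$ --- which is precisely the state from which $\generatedstateseq(\langle\inp\rangle,\state_0)$ unfolds. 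Hence the last batch of $\tilde{\boldsymbol{w}}$ produces the same output $\outp$, and its last initial state has relevant component $\getrelstate(\state_0)$; applying functional correctness to $\tilde{\boldsymbol{w}}$ (a run from a concrete initial state whose final batch is $\inp$) yields the claim. I expect this splicing to be the main obstacle: it rests on a \emph{prefix-consistency} property of $\generatedstateseq$ --- that the states generated for a prefix of batches depend only on that prefix --- which I would establish by induction on the recursive definition of $\generatedstateseq$ (Definition~\ref{def:successor:states}).

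With the claim established, the lemma is immediate. Given a strong-FC instance with reachable initial states $\state_0,\state_0'$, input batches $\inp,\inp'$, outputs $\outp,\outp'$, and indices $j,j'$ satisfying $\inp(j) = \inp'(j')$ and $\getrelstate(\state_0) = \getrelstate(\state_0')$, I apply the claim twice to obtain $\outp(j) = \specout(\inp(j),\getrelstate(\state_0))$ and $\outp'(j') = \specout(\inp'(j'),\getrelstate(\state_0'))$. Since the two argument pairs of $\specout$ coincide by hypothesis and $\specout$ is a function, the right-hand sides are equal, so $\outp(j) = \outp'(j')$ --- exactly strong functional consistency.
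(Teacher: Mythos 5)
Your proof is correct and follows essentially the same route as the paper's: both bridge the concrete-versus-reachable gap by splicing the new input batch onto the prefix of the witnessing run (the paper phrases this as replacing the last element of the witnessing sequence $\inpseq$ with $\inp$, after a WLOG truncation making $\state_0$ the last initial state) and then apply functional correctness together with the fact that $\specout$ is a function. The only difference is one of rigor, in your favor: you factor out a single-run claim applied twice, explicitly handle the base case $\state_0 \in \initstates$, and name the prefix-consistency property of $\generatedstateseq$ on which the paper's truncation and batch-replacement steps silently rely.
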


A challenge with using strong FC is that it requires starting
with reachable initial states.  However, we found that in practice
(cf., Section~\ref{sec:results}), it is seldom necessary to add any
constraints on the initial states.  This may seem surprising given the
well-known problem of spurious counterexamples that arises when using
formal to prove functional correctness without properly constraining
initial states.  There are at least two reasons for this.  First, many
HAs have less dependence on internal state (none for non-interfering
HAs) than other kinds of designs.  But second, and more importantly,
FC is a much more forgiving property than design-specific correctness.
Many designs are functionally consistent, even when run from
unreachable states.  In fact, we believe that this is a natural
outcome of good design and that designing for FC is a sweet spot in
the trade-off between design for verification and other design goals.
If designers take care to ensure FC, even from unreachable states,
then strong FC is both sound and easy to formulate.

Even simpler versions of the checks above can be obtained by making them \emph{intra-batch} checks.  An HA is \emph{intra-batch functionally consistent} if it is functionally consistent when $i=n=1$.  That is, intra-batch FC checks are based on sending a single input batch to the HA. Consequently, it is not necessary to identify and compare the relevant parts of the initial states (cf.~Definition~\ref{def:weak:func:cons:NEW:MODEL}) as there is precisely one initial state being used. Similarly, an HA is \emph{intra-batch strongly functionally consistent} if it is strongly functionally consistent when $s_0=s_0'$ and $\inp = \inp'$. Again, only one input batch is sent to the HA and the relevant parts of the initial states are thus always equal.  As we will show in Section~\ref{sec:results}, intra-batch checks can be a very effective approach for cheaply finding bugs. Intra-batch checks are applicable only to batch-mode HAs; i.e., they are not applicable in the context of \aqed targeted at HAs processing sequences of single inputs~\cite{singh:aqed:2020} rather than input batches.

While functional consistency alone can find many bugs, it becomes a complete technique (i.e., it finds all bugs) by combining it with \emph{single-action checks}.

\begin{definition}[Single-Action Correctness (\emph{SAC})]
\label{def:single:act:corr}
An HA $\acc$ is \emph{single-action correct (\emph{SAC})} with respect to
an abstract specification $\specout$ if, for every batch element $(\action,\data)$ and for every reachable relevant state $\staterel$, there exists some reachable initial state $\state$, such that
$\getinputstate(\state)(j) = (\action, \data)$ for some $j$,
$\getrelstate(\state) = \staterel$, and
$\getoutputstate(\getseqfinalstates(\transfuncseq(\state)))(j) = \specout((\action,\data),\staterel)$.

\end{definition}

\noindent
Essentially, SAC requires that for each action $\action$, data
$\data$, and reachable relevant state $\staterel$, we have checked
that the result is computed correctly when starting from some reachable initial state $\state$ whose relevant state matches $\staterel$. For every batch element
$(\action,\data)$ and $\staterel$, it is sufficient to run a single check where we can choose
$(\action,\data)$ to be at any arbitrary position $j$ in the
batch $\getinputstate(\state)$. Checking SAC \emph{does} require using
the specification explicitly, but these kinds of checks typically
already exist in unit or regression tests.  SAC may even be possible
to verify using simulation.  As we show in
Section~\ref{sec:results}, many bugs can be discovered without
checking SAC at all.

When formalizing single-action checks, we again advocate using an over-approximation for reachability and encourage the design of HAs with simple over-approximations for the set of reachable relevant states.  For the encryption example we gave above, the set of reachable relevant states is just the set of valid keys, which should be easy to specify.

In earlier work, using a slightly different HA model, we showed that SAC and functional consistency ensure correctness only when the HA is \emph{strongly connected (SC)}, that is, when there exists a sequence of state transitions from every reachable state to every other reachable state.  The same is true here.

\begin{lemma}[Completeness of SAC + FC + SC]
\label{lem:strong:FC:complete:NEW:MODEL:connect}
If an HA is strongly connected and single-action correct and has a bug, then it is
not functionally consistent.
\end{lemma}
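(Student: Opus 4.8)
The plan is to establish the statement essentially as a completeness argument: assuming strong connectivity (SC) and single-action correctness (SAC), I will turn the bug into an explicit witness and then assemble from it a single input-batch sequence, started in one concrete initial state, in which two identical batch elements seen at equal relevant states produce different outputs. That is precisely a failure of Definition~\ref{def:weak:func:cons:NEW:MODEL}, so the HA is not functionally consistent.

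First I would unpack the bug. Since a bug is a failure of functional correctness (Definition~\ref{def:corr:NEW:MODEL}), there exist a concrete initial state $\state_0 \in \initstates$ and a sequence $\inpseq = \langle \inp_1,\ldots,\inp_m\rangle$ whose run $\stateseq = \generatedstateseq(\inpseq,\state_0)$ produces a final output batch $\outp_m$ with $\outp_m(j) \neq \specout(\inp_m(j), \getrelstate(\initstateargs{m}))$ for some $j$. Write $(\action,\data) = \inp_m(j)$ and $\staterel = \getrelstate(\initstateargs{m})$. The key observation is that $\initstateargs{m}$ occurs on a run from a concrete initial state, hence is reachable, so $\staterel$ is a reachable relevant state. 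This lets me invoke SAC (Definition~\ref{def:single:act:corr}) on $(\action,\data)$ and $\staterel$ to obtain a reachable initial state $s^{\mathit{sac}}$ with $\getinputstate(s^{\mathit{sac}})(j') = (\action,\data)$, $\getrelstate(s^{\mathit{sac}}) = \staterel$, and $\getoutputstate(\getseqfinalstates(\transfuncseq(s^{\mathit{sac}})))(j') = \specout((\action,\data),\staterel)$. Thus the buggy run and the SAC witness both feed the same input $(\action,\data)$ at the same relevant state $\staterel$, yet the former yields $\outp_m(j)$ while the latter yields $\specout((\action,\data),\staterel) \neq \outp_m(j)$.

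Next I would place both events in one sequence launched from a single concrete initial state. I keep the buggy run as a prefix, so batch $m$ still outputs the wrong value at position $j$, and then extend it. Let $\state_{\mathit{end}}$ be the final state reached at the end of the buggy run; it is reachable, as is $s^{\mathit{sac}}$, so by strong connectivity there is a sequence of transitions -- realized as additional input batches -- from $\state_{\mathit{end}}$ to $s^{\mathit{sac}}$. Appending these navigation batches and then one final batch whose initial state coincides with $s^{\mathit{sac}}$ (its input set to $\getinputstate(s^{\mathit{sac}})$, the remaining memory components delivered by the navigation) yields a combined sequence $\inpseq' = \langle \inp_1,\ldots,\inp_m,\ldots,\inp_n\rangle$ from $\state_0 \in \initstates$. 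Since the prefix is unchanged, batch $m$ still outputs $\outp_m(j)$ at relevant state $\staterel$; since the last batch starts in $s^{\mathit{sac}}$, determinism of $\transfuncseq$ reproduces $\specout((\action,\data),\staterel)$ at position $j'$, again at relevant state $\staterel$. Taking $i = m$, the positions satisfy $\inp_i(j) = \inp_n(j')$ and $\getrelstate(\initstateargs{i}) = \getrelstate(\initstateargs{n}) = \staterel$ while $\outp_i(j) \neq \outp_n(j')$, contradicting Definition~\ref{def:weak:func:cons:NEW:MODEL}.

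The main obstacle I anticipate is justifying the strong-connectivity splice at the level of input batches. I must argue that the transition path guaranteed by SC between the two reachable states can be realized as a legal sequence of input batches that arrives exactly at $s^{\mathit{sac}}$, so that the final batch faithfully reproduces the SAC output. The two delicate points are (i) that control component $\initstatectrl$ is attained only at batch boundaries, so the last navigation step leaves the memory components matching $s^{\mathit{sac}}$ before its input is freely set, and (ii) that the relevant state needs to equal $\staterel$ only at the two compared batches $i$ and $n$ -- the navigation in between may perturb the relevant state, since functional consistency compares only those two positions. Once the splice is established, the contradiction with functional consistency follows immediately.
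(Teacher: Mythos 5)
Your proposal is correct and follows essentially the same route as the paper's proof: extract the bug witness, observe that the relevant state at the buggy batch is reachable, invoke SAC to obtain a reachable initial state with matching relevant state and correct output, use strong connectedness to splice a navigation sequence from the final state of the buggy run to the SAC witness, and then compare the two batch elements within the single combined sequence to contradict Definition~\ref{def:weak:func:cons:NEW:MODEL}. Your explicit treatment of the splice details (realizing the SC path as input batches, matching memory at the batch boundary, and noting that intermediate relevant states need not equal $\staterel$) is in fact somewhat more careful than the paper's own argument, which glosses over these points.
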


\noindent
However, strong functional consistency leads to an even stronger result.

\begin{lemma}[Completeness of SAC + Strong FC]
\label{lem:strong:FC:complete:NEW:MODEL}
If an HA is single-action correct and has a bug, then it is
not strongly functionally consistent.
\end{lemma}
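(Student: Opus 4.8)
The plan is to prove the lemma directly by exhibiting, from a single bug, two reachable initial states and two single-element input sequences that jointly violate strong FC. I would begin by unpacking what ``has a bug'' means via Definition~\ref{def:corr:NEW:MODEL}: the failure of functional correctness gives a concrete initial state $\state_0 \in \initstates$ and an input sequence $\inpseq = \langle \inp_1, \ldots, \inp_n \rangle$ whose generated sequence $\stateseq = \generatedstateseq(\inpseq, \state_0)$ has last initial state $\initstateargs{n}$ and final output batch $\outp_n$ such that $\outp_n(j^*) \neq \specout(\inp_n(j^*), \getrelstate(\initstateargs{n}))$ for some $j^* \in [1,b]$. I would name the offending batch element $(\action^*,\data^*) := \inp_n(j^*)$ and the relevant state $\staterel^* := \getrelstate(\initstateargs{n})$, so the first witness run is already in hand: the buggy run produces the ``wrong'' value $\outp_n(j^*)$ for $(\action^*,\data^*)$ under $\staterel^*$.

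The second step is to manufacture a matching ``correct'' run via SAC. Here I would first observe that $\initstateargs{n}$ is reachable, since it occurs in $\generatedstateseq(\inpseq, \state_0)$ with $\state_0 \in \initstates$ (Definition~\ref{def:reachable}); consequently $\staterel^*$ is a reachable relevant state. Invoking single-action correctness (Definition~\ref{def:single:act:corr}) for the pair $(\action^*,\data^*)$ and the reachable relevant state $\staterel^*$ then yields a reachable initial state $\state'$ and a position $j'$ with $\getinputstate(\state')(j') = (\action^*,\data^*)$, $\getrelstate(\state') = \staterel^*$, and $\getoutputstate(\getseqfinalstates(\transfuncseq(\state')))(j') = \specout((\action^*,\data^*),\staterel^*)$.

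Finally, I would instantiate the strong FC definition (Definition~\ref{def:strong:fc}) with $\state_0 := \initstateargs{n}$, $\inp := \inp_n$, $\state_0' := \state'$, and $\inp' := \getinputstate(\state')$. Both chosen initial states are reachable, as required. The premises of strong FC hold at positions $j^*, j'$: the inputs agree, $\inp(j^*) = (\action^*,\data^*) = \inp'(j')$, and the relevant states agree, $\getrelstate(\state_0) = \staterel^* = \getrelstate(\state_0')$. Yet the outputs disagree: the first single-batch run produces $\outp(j^*) = \outp_n(j^*)$, which differs from $\specout((\action^*,\data^*),\staterel^*)$, while the second produces exactly $\outp'(j') = \specout((\action^*,\data^*),\staterel^*)$ by SAC. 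Hence $\outp(j^*) \neq \outp'(j')$, contradicting the consistency conclusion and showing $\acc$ is not strongly functionally consistent.

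I expect the main obstacle to be the bookkeeping that identifies the outputs of the two strong-FC runs with the buggy output and the SAC witness output, respectively. This requires unwinding Definition~\ref{def:successor:states}: when the fed batch coincides with the input component already stored in the starting state, the initial ``replace-input'' step is a no-op, so $\generatedstateseq(\langle \inp_n \rangle, \initstateargs{n})$ reduces to $\initstateargs{n} \concatseq \transfuncseq(\initstateargs{n})$ and the output read from its (unique) final state is precisely $\outp_n$; similarly for $\state'$. I also need to verify that the relevant state passed to $\specout$ in Definition~\ref{def:corr:NEW:MODEL} is the relevant state at the \emph{start} of the last run, namely $\getrelstate(\initstateargs{n}) = \staterel^*$, which is exactly the value SAC is applied with. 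Once this alignment of relevant-state bookkeeping and single-batch semantics is pinned down, the contradiction is immediate, and notably only two runs are needed, with no strong-connectivity assumption.
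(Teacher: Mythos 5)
Your proposal is correct and follows essentially the same route as the paper's own proof: extract the bug witness $(\initstateargs{n}, \inp_n, j^*)$, note that $\initstateargs{n}$ and hence $\getrelstate(\initstateargs{n})$ are reachable, obtain a correct second run from single-action correctness, and instantiate Definition~\ref{def:strong:fc} with these two runs to force $\outp(j^*) \neq \outp'(j')$. If anything, you are slightly more careful than the paper: you keep the SAC witness position $j'$ distinct from $j^*$ (where the paper invokes a without-loss-of-generality step), and you explicitly justify via Definition~\ref{def:successor:states} that re-executing $\inp_n$ from $\initstateargs{n}$ reproduces the buggy output $\outp_n$ (the input-replacement step being a no-op), a point the paper leaves implicit.
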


Finally, to address timeliness of results in addition to correctness, we define a notion of \emph{responsiveness} for our model.

\begin{definition}[Responsiveness]
\label{def:rb}
An HA is \emph{responsive with respect to bound $n$} if, for all concrete initial states $\state_0 \in \initstates$, sequences $\inpseq$ of input batches, and input batches $\inp$, if 

\begin{itemize}
  
\item $\stateseq = \generatedstateseq(\inpseq, \state_0) = \langle \state_0, \ldots, \state_{m} \rangle$ and 

\item $\stateseq' = \generatedstateseq(\inpseq \concatseq \inp, \state_0) = \langle \state_0, \ldots, \state_{m+l} \rangle$,

\end{itemize}
then $l \le n$.
\end{definition}

%%%%%%%%%%%%%%%%%%%%%%%%%%%%%%%%%%%%%%%%%%%%%%%%%%%%%%%%%%%%%%%%%

\subsection{Decomposition for FC Checking}

We now show how FC of a decomposed design can be derived from FC of its parts.  We first give conditions under which two HAs can be composed.

\begin{definition}[Functionally Composable]
  \label{def:composable}
$\acc_1$ and $\acc_2$ are \emph{functionally composable} if: (i) $\batchsize_1=\batchsize_2$; (ii) $\outputs_1= \actions_2 \times \datasetin_2$; (iii) $\statesctrlargs{1} \cap \statesctrlargs{2} = \emptyset$; (iv) $\statesrelargs{1}= \statesrelargs{2}$; and (v) $\statesnonrelargs{1} = \statesoutputargs{2} \times \statesnonrel'$ and $\statesnonrelargs{2} = \statesinputargs{1} \times \statesnonrel'$ for some $\statesnonrel'$.

\end{definition}

\noindent
Note in particular that composability requires that the outputs of $\acc_1$ match the inputs of $\acc_2$.  We also require that the two HAs have isomorphic memory states, which is ensured by including $\statesoutputargs{2}$ in the non-relevant states of $\acc_1$ and $\statesinputargs{1}$ in the non-relevant states of $\acc_2$.  In order to map a memory state of $\acc_1$ to the corresponding memory state in $\acc_2$, we define a mapping function $\alpha: \statesmemargs{1} \to \statesmemargs{2}$ as follows: $\alpha(\state_m) = (\getoutputstate(\state_{m}),\getnonrelstate(\state_m)(1),\getrelstate(\state_m),(\getinputstate(\state_m),\getnonrelstate(\state_m)(2)))$.  We next define functional composition.

\begin{definition}[Functional Composition, Sub-Accelerators]
\label{def:acc:decomp2}
Given functionally composable HAs $\acc_1$ and $\acc_2$, we define the \emph{functional composition}
$\acc_0 = \acc_2 \circ \acc_1$ ($\acc_1$
and $\acc_2$ are called \emph{sub-accelerators} of $\acc_0$) as follows: $\batchsize_0 = \batchsize_1$, $\actions_0 = \actions_1$, $\datasetin_0 = \datasetin_1$, $\outputs_0 = \outputs_2$, $\statesctrlargs{0} = \statesctrlargs{1} \cup \statesctrlargs{2}$, $\statesmemargs{0} = \statesmemargs{1}$, $\initstatectrlargs{0} = \initstatectrlargs{1}$, $\finalstatectrlargs{0} = \finalstatectrlargs{2}$, $\states_{m,I,0} = \states_{m,I,1}$.  The transition function is defined as follows.  $\transfunc_0(\statectrl,\statemem) =$
\begin{itemize}
    \item[(i)] if $\statectrl\in\statesctrlargs{1}$ and $\statectrl \not= \finalstatectrlargs{1}$ then $\transfunc_1(\statectrl,\statemem)$;
    \item[(ii)] if $\statectrl\in\statesctrlargs{2}$ then $\transfunc_2(\statectrl,\alpha(\statemem))$; and
    \item[(iii)] if $\statectrl = \finalstatectrlargs{1}$ then
    $(\initstatectrlargs{2},\alpha(\statemem))$.
\end{itemize}

\end{definition}

\noindent
Definition~\ref{def:acc:decomp2} essentially states that an execution of $\acc_0 = \acc_2 \circ \acc_1$ is obtained by first running $\acc_1$ to completion, then passing the outputs of $\acc_1$ to the inputs of $\acc_2$, and then running $\acc_2$ to completion.
As a variant of Definition~\ref{def:acc:decomp2}, it is also possible to
define functional composition where the sub-accelerators operate in
parallel. This way, the sub-accelerators process non-overlapping parts
of a given input batch and produce the respective non-overlapping parts of
the output \nolinebreak batch.

We now introduce a compositional version of FC.

\begin{definition}[Strong FC for Decomposition (\emph{FCD})]
\label{def:strong:fcd}
An HA $\acc$ is \emph{strongly functionally consistent for
  decomposition (strongly FCD)} if it is strongly functionally consistent and, in
addition to $\outp(j) = \outp'(j')$, the property
$\getrelstate(\finalstate) = \getrelstate(\finalstate')$ holds in the
conclusion of the implication in Definition~\ref{def:strong:fc}.
\end{definition}

\noindent
Note that strong FCD is stronger than strong FC.  In order to stitch together results on sub-accelerators, we need to establish that not only the output but also the relevant memory state is the same after processing identical inputs.  The following is clear from the definition.

\begin{corollary}
\label{cor:strong:fcd:fc}
If an HA $\acc$ is strongly FCD, then $\acc$ is strongly
FC. 
\end{corollary}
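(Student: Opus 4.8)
The plan is to observe that strong FCD is, by Definition~\ref{def:strong:fcd}, nothing more than strong FC with a strengthened conclusion of the implication, so the result follows essentially by unfolding the definitions and applying the trivial logical fact that a stronger conclusion still yields the weaker one. In other words, the entire content of the corollary is that dropping one conjunct from the conclusion of an implication preserves the implication.

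First I would fix an arbitrary HA $\acc$ that is strongly FCD and show directly that it satisfies Definition~\ref{def:strong:fc}. To that end I would take arbitrary reachable initial states $\state_0, \state_0'$ and input batches $\inp, \inp'$, together with the induced sequences $\stateseq, \stateseq'$, the (unique) final states $\finalstate, \finalstate'$, and the outputs $\outp, \outp'$, named exactly as in both definitions. The key simplification is that Definitions~\ref{def:strong:fc} and~\ref{def:strong:fcd} quantify over precisely the same objects with precisely the same hypothesis, so no translation between the two settings is needed; I can reuse the data verbatim. I would then assume the shared antecedent $\inp(j) = \inp'(j') \wedge \getrelstate(\state_0) = \getrelstate(\state_0')$ for some $j, j' \in [1,b]$.

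Because $\acc$ is strongly FCD, the strengthened conclusion of Definition~\ref{def:strong:fcd} holds, namely $\outp(j) = \outp'(j') \wedge \getrelstate(\finalstate) = \getrelstate(\finalstate')$. Discarding the second conjunct leaves $\outp(j) = \outp'(j')$, which is exactly the conclusion required by Definition~\ref{def:strong:fc}. Since the states, batches, and indices $j, j'$ were all arbitrary, this establishes that $\acc$ is strongly FC.

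I do not expect any genuine obstacle here: the only substantive step is to check that the two definitions share identical quantifier prefixes and hypotheses, so that the FCD guarantee can be invoked and its extra relevant-state equality simply ignored. This is exactly why the statement is phrased as a \emph{corollary} of Definition~\ref{def:strong:fcd} rather than proved as a separate lemma; the proof is immediate from the way strong FCD is defined as a refinement of strong FC.
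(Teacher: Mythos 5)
Your proof is correct and matches the paper's treatment: the paper gives no separate argument for this corollary, stating only that it is ``clear from the definition,'' and your unfolding—same quantifiers, same antecedent, conclusion weakened by dropping the conjunct $\getrelstate(\finalstate) = \getrelstate(\finalstate')$—is exactly that definitional observation spelled out. Note that the paper's Definition~\ref{def:strong:fcd} even phrases strong FCD as ``strongly functionally consistent \emph{and}, in addition\dots,'' so the result also follows by bare conjunction elimination, which is logically equivalent to your weakening argument.
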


We now show that composition preserves strong FCD and then state our main result.

\begin{lemma}[Functional Composition and Strong FCD]
\label{prop:fc:decomp}
Let $\acc_0 = \acc_2 \circ \acc_1$. 
If both $\acc_1$
and $\acc_2$ are strongly FCD then $\acc_0$ is
strongly FCD.
\end{lemma}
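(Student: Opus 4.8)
The plan is to prove strong FCD of $\acc_0$ by replaying a single-batch execution of $\acc_0 = \acc_2 \circ \acc_1$ as an execution of $\acc_1$ followed by an execution of $\acc_2$, and then applying the strong FCD hypothesis to each sub-accelerator in turn. The crucial observation is that strong FCD (Definition~\ref{def:strong:fcd}) guarantees not only that matching inputs produce matching outputs, but also that the \emph{relevant} parts of the two final states agree; this second guarantee is precisely what is needed to establish a matching precondition for the second sub-accelerator. It is therefore essential that the hypothesis be FCD rather than merely strong FC: the relevant-state agreement is the inductive glue that makes the two applications chain.

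First I would fix two reachable initial states $\state_0, \state_0'$ of $\acc_0$, two input batches $\inp, \inp'$, and a pair of positions $j, j'$ with $\inp(j) = \inp'(j')$ and $\getrelstate(\state_0) = \getrelstate(\state_0')$; the goal is to establish $\outp(j) = \outp'(j')$ and $\getrelstate(\finalstate) = \getrelstate(\finalstate')$, where $\outp, \finalstate$ (resp.\ $\outp', \finalstate'$) are the output batch and final state of the corresponding $\acc_0$ run. By the shape of $\transfunc_0$ in Definition~\ref{def:acc:decomp2}, each run factors into an $\acc_1$-phase (clause (i), governed by $\transfunc_1$) ending at $\finalstatectrlargs{1}$ in a state with memory $m_1$, a single bridging step (clause (iii)) that resets control to $\initstatectrlargs{2}$ and applies $\alpha$, and an $\acc_2$-phase (clause (ii), governed by $\transfunc_2$) ending at $\finalstatectrlargs{2} = \finalstatectrlargs{0}$. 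I would make this factorization precise as a lemma: the prefix up to $\finalstatectrlargs{1}$ coincides with the $\acc_1$ run on $\inp$ from $\state_0$, and the suffix coincides with the $\acc_2$ run from $\alpha(m_1)$, whose input batch is $\getoutputstate(m_1)$ (which lies in $\statesinputargs{2}$ by composability conditions (i)--(ii)).

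The chaining then proceeds in two applications. Applying strong FCD of $\acc_1$ to the two $\acc_1$-phases, whose inputs match at $(j,j')$ and whose initial relevant states agree (they are $\getrelstate(\state_0)$ and $\getrelstate(\state_0')$), yields both $\getoutputstate(m_1)(j) = \getoutputstate(m_1')(j')$ and $\getrelstate(m_1) = \getrelstate(m_1')$. Because $\alpha$ leaves the relevant component untouched and moves $\acc_1$'s output batch into $\acc_2$'s input slot, the two $\acc_2$-phases have inputs that again match at $(j,j')$ and initial relevant states that still agree. A second application, now of strong FCD of $\acc_2$, therefore delivers $\outp(j) = \outp'(j')$ together with $\getrelstate(\finalstate) = \getrelstate(\finalstate')$, which are exactly the two conjuncts in the FCD conclusion for $\acc_0$.

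The step I expect to be the main obstacle is discharging the reachability side-conditions of the two FCD invocations. Strong FCD quantifies over reachable initial states of each sub-accelerator, but composability (Definition~\ref{def:composable}) constrains only the \emph{shapes} of the memory spaces and leaves the concrete-initial-state sets $\states_{m,I,1}$ and $\states_{m,I,2}$ essentially unrelated; in particular, there is no a priori reason that $\alpha(m_1)$ is a reachable initial state of $\acc_2$. I would resolve this the way the surrounding text advocates, by reading ``reachable'' through a suitable over-approximation that is closed under the transitions induced by composition, so that $\state_0, \state_0'$ count as reachable initial states of $\acc_1$ and the $\alpha$-images of $\acc_1$'s reachable final memories count as reachable initial states of $\acc_2$ (equivalently, by assuming the sub-accelerators are FCD-checked over initial-state sets that contain these states). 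The remainder is routine bookkeeping: verifying that $\alpha$ is a well-typed bijection under conditions (ii), (iv), (v), that it preserves the relevant component, and that the control-state disjointness (iii) makes the phase decomposition unambiguous.
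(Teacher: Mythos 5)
Your proposal is correct and follows essentially the same route as the paper's own proof sketch: factor each run of $\acc_0 = \acc_2 \circ \acc_1$ into an $\acc_1$-phase and an $\acc_2$-phase via the clauses of $\transfunc_0$, apply strong FCD of $\acc_1$ to obtain both matching outputs and matching relevant final states, use $\alpha$ to hand these over as matching inputs and matching relevant initial states for $\acc_2$, and apply strong FCD of $\acc_2$ to conclude both conjuncts for $\acc_0$. The one point where you go beyond the paper is your explicit handling of the reachability side-condition (whether the $\alpha$-image of $\acc_1$'s final memory counts as a reachable initial state of $\acc_2$); the paper's proof sketch passes over this silently, and your over-approximation reading is consistent with how the paper treats reachability in practice, where all checks start from fully symbolic initial states.
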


\begin{theorem}[Completeness of \aqedd]
\label{thm:fcd:sac:completeness}
Let $\acc_0, \acc_1$, and $\acc_2$ be HAs such that $\acc_0 =
\acc_2 \circ \acc_1$ and $\acc_0$ is single-action correct.
If $\acc_1$ and $\acc_2$ are strongly FCD then $\acc_0$ is
functionally correct.
\end{theorem}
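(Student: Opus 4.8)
The plan is to obtain this theorem as a short deduction that chains together the three immediately preceding results, rather than reasoning about executions of $\acc_0$ directly.

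First I would apply Lemma~\ref{prop:fc:decomp} to the hypothesis that $\acc_1$ and $\acc_2$ are strongly FCD. Since $\acc_0 = \acc_2 \circ \acc_1$, that lemma yields directly that $\acc_0$ is itself strongly FCD. Next I would invoke Corollary~\ref{cor:strong:fcd:fc} to step down from the FCD property to ordinary strong functional consistency, concluding that $\acc_0$ is strongly FC.

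The final step combines strong FC with the single-action-correctness hypothesis through the contrapositive of Lemma~\ref{lem:strong:FC:complete:NEW:MODEL}. That lemma asserts that a single-action-correct HA possessing a bug cannot be strongly functionally consistent; contrapositively, an HA that is both single-action correct and strongly FC has no bug. Applying this to $\acc_0$, which is single-action correct by assumption and strongly FC by the previous step, shows that $\acc_0$ has no bug. Since a bug is by definition a failure of functional correctness, $\acc_0$ is functionally correct, as required.

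I do not expect the difficulty to reside in this theorem; it is essentially a corollary of the supporting lemmas. The genuine work lives in Lemma~\ref{prop:fc:decomp}, whose proof must show that an execution of $\acc_0$ factors into an $\acc_1$-phase followed by an $\acc_2$-phase (via Definition~\ref{def:acc:decomp2} and the memory-state map $\alpha$), and that the strengthened FCD conclusion---equality of the relevant final state in addition to the output---is precisely what lets the $\acc_1$-phase hand off identical data and identical relevant state to the $\acc_2$-phase, so that $\acc_2$'s strong FC can close the argument. The role of FCD over plain strong FC is exactly to propagate relevant-state equality across the composition boundary; once that lemma is in hand, the present theorem follows immediately.
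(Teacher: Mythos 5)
Your proposal is correct and follows exactly the paper's own proof: Lemma~\ref{prop:fc:decomp} to lift strong FCD to $\acc_0$, Corollary~\ref{cor:strong:fcd:fc} to weaken it to strong FC, and the contrapositive of Lemma~\ref{lem:strong:FC:complete:NEW:MODEL} to conclude functional correctness from SAC plus strong FC. Your closing remark is also accurate---the paper likewise treats this theorem as an immediate consequence of the supporting lemmas, with the substantive work residing in the proof of Lemma~\ref{prop:fc:decomp}.
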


\noindent
Theorem~\ref{thm:fcd:sac:completeness} states that \aqedd is
complete. That is, by contraposition, if an HA $\acc_0$ has a
bug, i.e., it is not functionally correct, then either $\acc_1$ or
$\acc_2$ is not strongly FCD, and thus the bug can be detected by \aqedd.

Note that there is no corresponding soundness result.  This is because it is possible to decompose a functionally consistent HA into functionally inconsistent sub-accelerators.  However, as shown in Section~\ref{sec:results}, this appears to be rare in practice, and here again we reiterate our position on design for verification and advocate that also sub-accelerators should be designed with functional consistency in mind.

Functional composition can easily be generalized to more than
two sub-accelerators. Moreover, it can be applied recursively to
further decompose sub-accelerators. 
If functional decomposition based on Definition~\ref{def:acc:decomp2} is not
applicable to further decompose a sub-accelerator, then such a
sub-accelerator can be decomposed using existing formal decomposition
approaches, though these require significant manual effort.
Our approach identifies conditions under which simple, automatable decomposition of FC checking is possible.

%%%%%%%%%%%%%%%%%%%%%%%%%%%%%%%%%%%%%%%%%%%%%%%%%%%%%%%%%%%%%%%%%
%%%%%%%%%%%%%%%%%%%%%%%%%%%%%%%%%%%%%%%%%%%%%%%%%%%%%%%%%%%%%%%%%

%
\section{\aqedd Functional Decomposition in Practice
}
\label{sec:algorithm}

We now present our implementation of \aqedd, which builds on the theoretical framework of the previous section.  We combine functional decomposition with checks for FC (dFC), SAC (dSAC), and
responsiveness (dRB).

\subsection{Decomposition for FC: dFC}
\label{sec:dFC}

dFC takes as input a non-interfering LCA design $\acc$ (satisfying Definitions~\ref{def:batch:mode:acc}
and~\ref{def:successor:states}) together with designer-provided
annotations (explained in this section). dFC decomposes $\acc$ into
sub-accelerators (following Definition~\ref{def:acc:decomp2}). 
FC checks are run on the sub-accelerators and any
counterexamples are reported.
Note that the way in which
$\acc$ is actually decomposed into sub-accelerators has no influence on the
completeness of \aqedd (Theorem~\ref{thm:fcd:sac:completeness}). That said, FC checks may scale better for certain
decompositions. While failing FC checks expose consistency issues at the sub-accelerator level, it is possible that they do not cause incorrect behaviors at the full $\acc$ level. However, we did not observe any instances of this in our experiments.

Our dFC implementation relies on identifying \emph{batch operations} in a given $\acc$. A batch operation operates on a vector of inputs, applying some action to each input in order to produce a vector of outputs.  The input to a batch operation could be an intermediate output batch of another sub-accelerator or an input batch to $\acc$ itself. A batch operation produces either an intermediate output batch which is subsequently processed
by another sub-accelerator or an output batch of $\acc$ itself. 

We assume that $\acc$ is expressed in a
high-level language, specifically as a C/C++ program\footnote{HAs expressed in Verilog or SystemC can be converted
  into C/C++, and then our dFC implementation can be
  applied. We do this in Sec.~\ref{sec:results}.} that implements
sequential computation of $\acc$ outputs from $\acc$ inputs.\footnote{Existing HLS tools (e.g., Xilinx
Vivado HLS, Mentor Catapult HLS) can then optimize $\acc$, incorporate
appropriate pipelining and parallelism,
and
produce Verilog for subsequent logic synthesis and physical design steps. Such
HLS-based HA design flows are becoming increasingly common.}
Batch operations in the C/C++ program are identified by finding
contiguous C/C++
statements 
called \emph{functional blocks} that implement those batch operations. Each functional
block represents a sub-accelerator.

We have developed a set of annotations by which the designer can help
identify these functional blocks. Examples of such annotations
are given in Listing~\ref{ex:acc:1} (extends Listing~\ref{ex:acc:2}). It has two functional blocks corresponding to batch operations: lines 15-17 and 32-33.

Annotations are defined by
particular keywords that are prefixed by ``\%'' (and denoted in blue) in Listing~\ref{ex:acc:1}. These annotations describe the compute and memory access patterns of the functional block as it transforms an input batch into an output batch.  In practice, hardware
designers already use similar annotations frequently, e.g., to express
parallelization opportunities for HLS to generate efficient hardware.  As a result, we expect manageable effort in creating such
annotations to support dFC.  The HLS research community is actively
developing new techniques to automatically explore the HA design space
and derive optimal design points together with appropriate
parallelization and pipelining
~\cite{wang2017flexcl,zhao2017comba,zhong2016lin}.  With tight
integration of \aqedd with HLS, we expect that it will be possible to generate dFC
annotations with low effort.

\begin{lstlisting}[label={ex:acc:1}, language=C++, caption={C/C++ Annotation Example (AES Encryption)}, breaklines=true,basicstyle=\scriptsize,     keywordstyle=\color{maroon}\bfseries, commentstyle=\color{darkgreen}\textit, stringstyle=\color{darkpink}\ttfamily, numbers=left, escapeinside={(*}{*)}, numbers=left,
xrightmargin=.3em,
xleftmargin=2.5em,frame=single,framexleftmargin=2em, numberstyle=\tiny]
#define BS ((1) << 12) // BUF_SIZE 
#define UF 2 // UNROLL_FACTOR 
#define US BS/UF // UNROLL_SIZE 

void fun(int data[BS], int buf[UF][US], int key[2]){
  int j, k;

  (*\textcolor{blue}{\bfseries \%IN\_SIZE}*) 16 // variables per input batch element
  (*\textcolor{blue}{\bfseries \%IN\_BATCH\_SIZE}*) BS/IN_SIZE // input batch size
  (*\textcolor{blue}{\bfseries \%BATCH\_MEM\_IN}*) data // input batch source
  (*\textcolor{blue}{\bfseries \%IN\_ALLOC\_RULE}*) in(x) addr range =
    [i*BS +  x*IN_SIZE :
    i*BS + (x + 1)*IN_SIZE]  // BATCH_MEM_IN layout
  (*\textit{\textcolor{darkgreen}{// ===${ACC_1}$ START===}}*)
  for(j=0; j<UF; j++)  
    for(k = 0; k < BS/UF; k ++) 
      buf[j][k] = *(data + i*BS + j*US + k)^key[0];
  (*\textit{\textcolor{darkgreen}{// ===${ACC_1}$ END===}}*)
  (*\textcolor{blue}{\bfseries \%OUT\_SIZE}*) 16 // variables per output batch element
  (*\textcolor{blue}{\bfseries \%OUT\_BATCH\_SIZE}*) BS/OUT_SIZE // output batch size
  (*\textcolor{blue}{\bfseries \%BATCH\_MEM\_OUT}*) buf // output batch source
  (*\textcolor{blue}{\bfseries \%IN\_ALLOC\_RULE}*) out(x) addr range = 
    [x/US][(x%US)*OUT_SIZE :
    ((x + 1)%US)*OUT_SIZE] // BATCH_MEM_OUT layout

  (*\textcolor{blue}{\bfseries \%IN\_SIZE}*) 16 
  (*\textcolor{blue}{\bfseries \%IN\_BATCH\_SIZE}*) BS/IN_SIZE                  
  (*\textcolor{blue}{\bfseries \%BATCH\_MEM\_IN}*) buf
  (*\textcolor{blue}{\bfseries \%IN\_ALLOC\_RULE}*) in(x) addr range =
    [(x%US)*IN_SIZE : ((x+1)%US)*IN_SIZE][x/US] 
  (*\textit{\textcolor{darkgreen}{// ===${ACC_2}$ START===}}*)
    for(j=0; j<UF; j++){
      aes256_encrypt(local_key[j], buf[j]);} 
  (*\textit{\textcolor{darkgreen}{// ===${ACC_2}$ END===}}*)
  (*\textcolor{blue}{\bfseries \%OUT\_SIZE}*) 16 
  (*\textcolor{blue}{\bfseries \%OUT\_BATCH\_SIZE}*) BS/OUT_SIZE
  (*\textcolor{blue}{\bfseries \%BATCH\_MEM\_OUT}*) buf
  (*\textcolor{blue}{\bfseries \%OUT\_ALLOC\_RULE}*) out(x) addr range =
    [(x%US)*OUT_SIZE : ((x+1)%US)*OUT_SIZE][x/US]
}
\end{lstlisting}

From the annotations, we create sub-accelerators. For example, the annotations in Listing~\ref{ex:acc:1} generate two sub-accelerators: $\acc_1$ corresponding to the functional block in Lines 15-17 with annotations in Lines 8-13 and 19-24, and $\acc_2$ corresponding to the functional block in Lines 32-33 with annotations in Lines 26-30 and 35-39.
For each sub-accelerator, we create an \emph{\aqedd module} for FC checking.\footnote{See the appendix for details.} It generates symbolic inputs for the sub-accelerator and symbolically executes the corresponding functional block in order to produce symbolic expressions for the outputs.
For strong FC checks (Definitions~\ref{def:weak:func:cons:NEW:MODEL}
and~\ref{def:strong:fc}), the relevant states (Definition~\ref{def:batch:mode:acc}) must additionally be identified and explicitly
constrained to be consistent across sub-accelerator calls processing two input batches. Identifying the relevant states is not necessary for intra-batch FC checks (discussed in the context of Lemma~\ref{lem:sound:NEW:MODEL:strong}).  For example, in sub-accelerator
$Acc_1$ in Listing~\ref{ex:acc:1}, \textit{key[0]} is a relevant state element 
(distinct from the batch input \textit{data}).
Between two calls of $Acc_1$ during a strong FC check,
\textit{key[0]} must be consistent.
In our implementation, we ignore reachability and allow all checks to start from fully symbolic initial states.  This does not lead to spurious counterexamples in our experiments.

\subsection{Decomposition for RB: dRB}
\label{sec:dRB}

The sub-accelerators for \aqedd's RB checks (Definition~\ref{def:rb}) can be (and often are) different from those for FC because RB involves a
much simpler check: \emph{some} output is produced within the response bound $n$. We expect $n$ to be provided by the designer for the top-level accelerator.  We then use the same bound $n$ for each sub-accelerator. The rationale is that if a sub-accelerator fails
an RB check, then the full accelerator would also fail the same RB check.

For dRB, we generate a static single assignment (SSA) representation of the design.
We then apply a \emph{sliding window algorithm} to
dynamically generate sub-accelerators. 
Lines of code in the SSA that fall within a certain \emph{window
  $W$} form the sub-accelerator. Due to SSA form, the inputs of this sub-accelerator are variables that are never updated or assigned in $W$ while the outputs are the variables which update variables outside $W$. The current size of $W$ is given by the number of LOCs that fit in $W$, and it changes dynamically during a run of the algorithm to incorporate the largest sub-accelerator that will fit the BMC tool. Once the sub-accelerator is verified, $W$ slides by $\delta$ LOCs ($\delta$ is a parameter) and adjusts its boundary to get the next largest sub-accelerator that can be verified. We synthesize that 
sub-accelerator using HLS (since some responsiveness bugs only manifest after HLS) and then run RB checks using BMC.  The initial states of each generated sub-accelerator are left unconstrained (i.e., fully symbolic) in order to analyze all possible behaviors. The specific size of
$W$ and its position in the SSA code change dynamically as dRB proceeds.  dRB terminates when $W$ reaches the end of the SSA code or if at any time an RB check fails.

\subsection{Decomposition for SAC: dSAC}
\label{sec:dSAC}

As mentioned above, and as will be shown in the next section, many bugs can be detected using only dFC and dRB.  The advantage of this is that both of these checks can be run without any functional specification.  dSAC completes the story, but at the cost of requiring specifications.  We use standard functional decomposition techniques (essentially, writing preconditions, invariants, and postconditions) to decompose SAC checks.  One feature of dSAC is that only a single input in a batch needs be checked---all other inputs in the batch can be set to constants (we use zero in our experiments).  This makes both writing the properties and checking them much simpler.  The non-input part of the initial state for each check is again kept fully symbolic for simplicity.
If a sub-accelerator is too big, we further decompose it using finer-grained functional blocks.

%%%%%%%%%%%%%%%%%%%%%%%%%%%%%%%%%%%%%%%%%%%%%%%%%%%%%%%%%%%%%%%%%
%%%%%%%%%%%%%%%%%%%%%%%%%%%%%%%%%%%%%%%%%%%%%%%%%%%%%%%%%%%%%%%%%

%
\section{Experimental Results}
\label{sec:results}
\begin{table*}
\centering
\setlength{\tabcolsep}{1.5pt}
\begin{tabular}{|l c c|c|c|c|c|c|c|c|c|c|}
\hline
                                           \multicolumn{3}{|c|}{\multirow{2}{*}{\textbf{\makecell{Design (\#Gates)     (\#Versions)\\94 versions in table, 15 in caption$^\dagger$}}}} & \textbf{\aqed FC} & \multicolumn{3}{c|}{\textbf{\aqedd dFC: Intra-batch FC}} & \multicolumn{3}{c|}{\textbf{\aqedd dFC:  Strong FC}} \\ \cline{4-10}  
                                           & & & \textbf{\begin{tabular}[c]{@{}c@{}}Avg. RT (min)\end{tabular}} & \textbf{\begin{tabular}[c]{@{}c@{}}Avg. RT (min)\end{tabular}} & \textbf{\begin{tabular}[c]{@{}c@{}}\#Bugs\end{tabular}} & \multicolumn{1}{c|}{\textbf{\begin{tabular}[c]{@{}c@{}}\#Sub-Acc.(T/P/C/B)\end{tabular}}} & \textbf{\begin{tabular}[c]{@{}c@{}}Avg. Runtime (min)\end{tabular}} & \textbf{\begin{tabular}[c]{@{}c@{}}\#Bugs\end{tabular}} & \multicolumn{1}{c|}{\textbf{\begin{tabular}[c]{@{}c@{}}\#Sub-Acc.(T/P/C/B)\end{tabular}}}\\\hline
\textbf{AES}~\cite{cong:bandwidth:2017} & (382k)  & (4)  & \makecell[c]{OOM} &   0.97 & 4 & 8 / 7 / 7 / 4 & timeout & 0 & 8 / 7 / 2 / 0  \\\hline
\textbf{ISmartDNN}~\cite{zhang2019skynet} & (42M)  & (3)  & timeout & 0.10 & 2 & 38 / 5 / 5 / 2  & 0.18 & 2 & 38 / 5 / 2 / 2    \\\hline
\textbf{grayscale128}~\cite{piccolboni2019kairos} & (351k) & (5) & timeout &  0.03 & 3 & 3 / 3 / 2 / 2 & 0.07 & 3 & 3 / 3 / 2 / 2    \\\hline
\textbf{grayscale64}~\cite{piccolboni2019kairos} & (194k)  & (5) & timeout &     0.02 & 3 & 3 / 3 / 2 / 2 & 0.02 & 3 & 3 / 3 / 2 / 2 \\\hline
\textbf{grayscale32}~\cite{piccolboni2019kairos} & (106k)  & (5) & 8.20 &  \!\!\!<0.01 & 5 & 3 / 3 / 3 / 3  & 0.30 & 5 & 3 / 3 / 3 / 3 \\\hline
\textbf{mean128}~\cite{piccolboni2019kairos} & (202k)  & (5) & timeout &    0.35 & 3 & 3 / 3 / 2 / 2  & 0.17 & 3 & 3 / 3 / 2 / 2   \\\hline
\textbf{mean64}~\cite{piccolboni2019kairos} & (104k)  & (5) & timeout &    0.38 & 3 & 3 / 3 / 2 / 2  & 0.13 & 3 & 3 / 3 / 2 / 2   \\\hline
\textbf{mean32}~\cite{piccolboni2019kairos} & (54k)  & (5) & 5.53 &  0.17 & 5 & 3 / 3 / 3 / 3  & 0.33 & 5 & 3 / 3 / 3 / 3 \\\hline
\textbf{dnn}~\cite{giordano2021CHIMERA} & (2M)  & (11)  & timeout  &  0.03 & 5 & 34 / 14 / 14 / 5  & 0.13 & 5 & 34 / 14 / 8 / 5   \\\hline
\textbf{nv\_large}~\cite{nvdla} & (16M)   & (23) & timeout &  1.17 & 11 & 89 / 46 / 46 / 11 & 2.93 & 9 & 89 / 46 / 21 / 9  \\\hline
\textbf{nv\_small}~\cite{nvdla} & (1M)  & (23) & timeout  &  0.07 & 11 & 89 / 46 / 46 / 11 & 1.03 & 11 & 89 / 46 / 26 / 11  \\\hline
\end{tabular}
\caption{%Summary of Empirical Results for dFC.
\small \textbf{Avg}. \textbf{R}un\textbf{T}imes of FC checks for \aqed and \aqedd. For \aqedd, sub-accelerator counts are provided, including the \textbf{T}otal count that resulted from dFC decomposition, the count with batch sizes greater than one (i.e., \textbf{P}arallel), the count (with batch sizes greater than one) for which FC checks were successful on 1 and 2 batches for intra-batch FC and strong FC respectively, and the count for which \textbf{B}ugs were detected by FC checks. For \aqed FC, experiments could not complete FC check for a single batch in 12 hours (\textbf{timeout}) or exhibited out-of-memory (\textbf{OOM}) errors before timeout. \textbf{Average} runtimes result from dividing the time to detect all bugs by the number of bugs. $^\dagger$\textbf{keypair}~\cite{pqc2019}, \textbf{gsm}~\cite{hara2008chstone}, \textbf{HLSCNN}~\cite{whatmough201916nm}, \textbf{FlexNLP}~\cite{Tambe_isscc2021}, \textbf{Dataflow}~\cite{chi2019rapid}, and \textbf{Opticalflow}~\cite{zhou2018rosetta} all time out for A-QED FC and do not contain any sub-accelerators with batch size greater than one. One OOB bug was detected in \textbf{gsm} and one initialization bug in \textbf{keypair}.}
\label{table:summary_results:dfc}
\end{table*}

\begin{table}
\centering

\setlength{\tabcolsep}{1pt}
\begin{tabular}{|l c c|c|c|c|c|}
\hline
\multicolumn{3}{|c|}{\multirow{4}{*}{\textbf{\textbf{\makecell{Design (\#Gates) (\#Versions)\\Total Versions = 109}}}}}                                                                                                                                     &  \textbf{\aqed RB}                                                              & \multicolumn{3}{c|}{\textbf{\aqedd dRB}}                                                    \\ \cline{4-7}
&                           &                                                    & \textbf{\begin{tabular}[c]{@{}c@{}}Avg. RT\\ (min)\end{tabular}} & \textbf{\begin{tabular}[c]{@{}c@{}}Avg. RT\\ (min)\end{tabular}} & \textbf{\begin{tabular}[c]{@{}c@{}}\#Bugs\end{tabular}} & \multicolumn{1}{c|}{\textbf{\begin{tabular}[c]{@{}c@{}}\#Sub-Acc.\\(T/C/B)\end{tabular}}} \\ \hline
\textbf{AES}~\cite{cong:bandwidth:2017} & (382k)  & (4) & timeout & \multicolumn{2}{c|}{\multirow{11}{*}{\makecell{No RB\\bug detected\\up to input\\sequence\\length\\between\\11 and 24\\depending on\\the design}}}  & 13 / 13 / 0 \\\cline{1-4} \cline{7-7}
\textbf{ISmartDNN}~\cite{zhang2019skynet} & (42M)  & (3)  & timeout  & 
\multicolumn{2}{c|}{} & 32 / 32 / 0     \\\cline{1-4} \cline{7-7}
\textbf{grayscale128}~\cite{piccolboni2019kairos} & (351k) & (5)  & timeout & \multicolumn{2}{c|}{}  &  5 / 5 / 0     \\\cline{1-4} \cline{7-7}
\textbf{grayscale64}~\cite{piccolboni2019kairos} & (194k)  & (5) & timeout & \multicolumn{2}{c|}{}  & 5 / 5 / 0   \\\cline{1-4} \cline{7-7}
\textbf{grayscale32}~\cite{piccolboni2019kairos} & (106k)  & (5) & \multicolumn{1}{c}{} & \multicolumn{2}{c|}{} & 3 / 3 / 0   \\\cline{1-4} \cline{7-7}
\textbf{mean128}~\cite{piccolboni2019kairos} & (202k)  & (5)  & timeout & \multicolumn{2}{c|}{}  & 5 / 5 / 0    \\\cline{1-4} \cline{7-7}
\textbf{mean64}~\cite{piccolboni2019kairos} & (104k)  & (5) & timeout & \multicolumn{2}{c|}{}  & 3 / 3 / 0   \\\cline{1-4} \cline{7-7}
\textbf{mean32}~\cite{piccolboni2019kairos} & (54k)  & (5) & \multicolumn{1}{c}{} & \multicolumn{2}{c|}{}  & 1 / 1 / 0   \\\cline{1-4} \cline{7-7}
\textbf{dnn}~\cite{giordano2021CHIMERA} & (2M)  & (11)  & timeout &\multicolumn{2}{c|}{}  & 5 / 5 / 0     \\\cline{1-4} \cline{7-7}
\textbf{keypair}~\cite{pqc2019} & (>200M) & (1) & timeout & \multicolumn{2}{c|}{} & 21 / 21 / 0    \\\cline{1-4} \cline{7-7}
\textbf{gsm}~\cite{hara2008chstone}& (8.8k) & (1) & timeout & \multicolumn{2}{c|}{}  & 7 / 7 / 0    \\ \hline
\textbf{nv\_large}~\cite{nvdla} & (16M)   & (23) & timeout & \multicolumn{3}{c|}{\multirow{2}{*}{\makecell{ No RB bugs expected}}}  \\\cline{1-4}
\textbf{nv\_small}~\cite{nvdla} & (1M)  & (23) & timeout &  \multicolumn{3}{c|}{}\\\hline
\textbf{HLSCNN}~\cite{whatmough201916nm} & (323k) &  (2)  & timeout & 2.33 & 1 & 25 / 25 / 1    \\\hline
\textbf{FlexNLP}~\cite{Tambe_isscc2021} & (567k) & (9) & timeout & \!\!\!10.77 & 9 & 15 / 15 / 9  \\\hline
\textbf{Dataflow}~\cite{chi2019rapid} & (296k) & (1) & 0.45 & 0.25 & 1 & 9 / 9 / 1   \\\hline
\textbf{Opticalflow}~\cite{zhou2018rosetta} & (555k) & (1) & timeout & 0.17 & 1 & 3 / 3 / 1 \\\hline  \end{tabular}

\caption{% Summary of Empirical Results for dRB.
\small  RB checks for \aqed and \aqedd.
For \aqedd, sub-accelerator counts produced by dFC are provided, as in Table~\ref{table:summary_results:dfc}. \aqedd RB checks are performed on all sub-accelerators regardless of batch size, so \textbf{P} is omitted compared to Table~\ref{table:summary_results:dfc}. For \aqed RB, RB checks did not complete even for a input sequence length of 1 within 12 hours (\textbf{timeout}). Sub-accelerators for which RB checks for at least input sequence length of 1 was completed were considered \textbf{C}omplete.  For the first 11 designs, from \textbf{AES} to \textbf{gsm}, no bugs related to unresponsiveness were detected by traditional simulation-based verification. Results are omitted for \textbf{nv\_large} and \textbf{nv\_small}; responsiveness related bugs generally result from parallelism and pipelining, both of which were lost in our manual translation of NVDLA from Verilog to sequential C code. 
}

\label{table:summary_results:drb}
\end{table}

\begin{table}
\centering
\setlength{\tabcolsep}{1pt}
\begin{tabular}{|l c c|c|c|c|c|c|}
\hline
\multicolumn{3}{|c|}{\multirow{4}{*}{\textbf{\textbf{\makecell{Design (\#Gates) (\#Versions)\\Total Versions = 109}}}}} & \multicolumn{4}{c|}{{\textbf{\aqedd dSAC}}} \\  \cline{4-7}  
                                           & &                                                                               & \textbf{\begin{tabular}[c]{@{}c@{}}Avg. RT\\ (min)\end{tabular}} & \textbf{\begin{tabular}[c]{@{}c@{}}\#Bugs\end{tabular}} &  \textbf{\makecell{Bug overlap\\with dFC}} & \multicolumn{1}{c|}{\textbf{\begin{tabular}[c]{@{}c@{}}\#Sub-Acc.\\(T/C/B)\end{tabular}}}  \\ \hline
\textbf{AES}~\cite{cong:bandwidth:2017} & (382k)  & (4)  & 0.12 & 0 & 0 & 8 / 8 / 0 \\\hline
\textbf{ISmartDNN}~\cite{zhang2019skynet} & (42M)  & (3)  & 0.22 & 3 & 2 & 38 / 38 / 3    \\\hline
\textbf{grayscale128}~\cite{piccolboni2019kairos} & (351k) & (5) &  0.04 & 2 & 2 & 3 / 2 / 2     \\\hline
\textbf{grayscale64}~\cite{piccolboni2019kairos} & (194k)  & (5) & 0.01 & 2 & 2 & 3 / 2 / 2  \\\hline
\textbf{grayscale32}~\cite{piccolboni2019kairos} & (106k)  & (5) & \!\!\!<0.01 & 2 & 2 &  3 / 3 / 2  \\\hline
\textbf{mean128}~\cite{piccolboni2019kairos} & (202k)  & (5) & 0.21 & 2 & 2 & 3 / 2 / 2  \\\hline
\textbf{mean64}~\cite{piccolboni2019kairos} & (104k)  & (5) & \!\!\!<0.01 &  2 & 2 & 3 / 2 / 2   \\\hline
\textbf{mean32}~\cite{piccolboni2019kairos} & (54k)  & (5) & \!\!\!<0.01 & 2 & 2 & 3 / 3 / 2  \\\hline
\textbf{dnn}~\cite{giordano2021CHIMERA} & (2M)  & (11)  & 0.01 & 6 & 0 & 34 / 14 / 6  \\\hline
\textbf{keypair}~\cite{pqc2019} & (>200M) & (1) & timeout & 0 & 0 &  14 / 14 / 0  \\\hline
\textbf{gsm}~\cite{hara2008chstone} & (8.8k) & (1) & timeout & 0 &  0 & 5 / 5 / 0  \\\hline
\textbf{nv\_large}~\cite{nvdla} & (16M)   & (23) & 0.84 & 12 & 6 &  89 / 89 / 12   \\\hline
\textbf{nv\_small}~\cite{nvdla} & (1M)  & (23) &  0.11 & 12 & 6 & 89 / 50 / 12  \\\hline
\textbf{HLSCNN}~\cite{whatmough201916nm} & (323k) &  (2)  & 0.45 & 1 & 0 & 25 / 11 / 1   \\\hline
\textbf{FlexNLP}~\cite{Tambe_isscc2021} & (567k) & (9) & timeout & 0 & 0 & 21 / 21 / 0  \\\hline
\textbf{Dataflow}~\cite{chi2019rapid} & (296k) & (1)  & timeout & 0 & 0  & 8 / 8 / 0  \\\hline
\textbf{Opticalflow}~\cite{zhou2018rosetta} & (555k) & (1) & timeout & 0 & 0 & 14 / 14 / 0 \\\hline
\end{tabular}
\caption{% Summary of Empirical Results for dRB.
\small SAC checks for \aqedd. Sub-accelerator counts produced by dSAC are provided, as in Table~\ref{table:summary_results:dfc}. \aqedd SAC checks were performed on all sub-accelerators regardless of batch size, so \textbf{P} is omitted compared to Table~\ref{table:summary_results:dfc}. 
}

\label{table:summary_results:dsac}
\end{table}

We demonstrate the practicality and effectiveness of \aqedd for 109 (buggy) versions of several non-interfering LCAs,\footnote{See the appendix for design details and the software artifact~\cite{aqed-decomp-artifact}.} including open-source industrial designs~\cite{nvdla}. We selected these designs for the following reasons:
\begin{itemize}
\item	They 
cover a wide variety of HAs (neural nets, image processing, natural language processing, security). Most are too large for existing off-the-shelf formal tools.
\item   They have been thoroughly verified (painstakingly) using state-of-the-art simulation-based verification techniques. Thus, we can quantify the thoroughness of \aqedd.
\item	With access to buggy versions, we did not have to artificially inject bugs. Bugs we encountered include incorrect initialization, incorrect memory accesses, incorrect array indexing, and unresponsiveness in HLS-generated designs. 

\end{itemize}

Many of the designs were already available in sequential C or C++. We converted Verilog and SystemC designs into sequential C. To facilitate dFC, we manually inserted annotations (like those in Listing~\ref{ex:acc:1}). For \aqed FC, we used CBMC for all designs originally represented in sequential C or C++. For designs in Verilog and SystemC, we used Cadence JasperGold (SystemC designs converted to Verilog via HLS). For \aqedd FC and SAC checks, we used CBMC version 5.10~\cite{kroening2014cbmc}. For \aqed and \aqedd RB checks, we used Cadence JasperGold version 2016.09p002 on Verilog designs generated by the HLS tools used by the designers. Lastly, we used
Frama-C~\cite{framac} to check for initialization and out-of-bounds bugs on the entire C/C++ designs. We ran all our experiments on Intel Xeon E5-2640 v3 with 128GBytes of DRAM.

Tables~\ref{table:summary_results:dfc},~\ref{table:summary_results:drb},~and~\ref{table:summary_results:dsac} summarize our results. We present comparisons between \aqedd (dFC, dRB, dSAC) and \aqed (FC, RB, SAC). Table~\ref{table:summary_results:dfc} also compares \aqedd intra-batch FC vs. \aqedd strong FC 
(cf.~details in the appendix).

    \textbf{Observation 1}: HAs from various domains (including industry) show that non-interfering LCAs are highly common.
    
    \textbf{Observation 2}: The vast majority of the studied HAs are too big for existing off-the-shelf formal verification tools, for both A-QED and conventional formal property verification. 
    
    \textbf{Observation 3}: Table~\ref{table:summary_results:dfc} shows that \aqedd intra-batch FC checks detected bugs inside sub-accelerators (with batch sizes > 1) very quickly---under a minute for almost all of the designs, and just over a minute for nv\_large. For most batch-mode sub-accelerators---except two for each of the following four designs (amounting to eight sub-accelerators in total): grayscale64, grayscale32, mean128, and mean32---intra-batch dFC checks were easily completed using off-the-shelf formal tools. Strong FC checks incur more complexity. Hence, the formal tool timed out after 12 hours for 62 sub-accelerators when running strong FC checks, distributed across multiple designs. Empirically, we found that intra-batch FC checks detected all bugs that were detected by strong FC checks.
    
    \textbf{Observation 4}: 
    \aqedd RB and \aqedd SAC are also highly effective in detecting bugs inside sub-accelerators. For the first 11 designs (\textbf{AES} to \textbf{gsm}) in Table~\ref{table:summary_results:drb}, we do not expect unresponsiveness bugs (confirmed by simulations). Hence, \aqedd RB checks ran for 12 hours (for increasingly longer input sequences) without detecting unresponsiveness.  For designs with RB bugs, \aqedd RB checks on sub-accelerators were able to detect those in less than 11 minutes on average. For \aqedd dSAC, we observed that a significant fraction (26 out of 46 bugs (56\%)) of these bugs were also detected by \aqedd FC checks. Thus, FC alone is effective at catching a wide variety of bugs.
    
    \textbf{Observation 5}: \aqedd detected all bugs that were detected by conventional (simulation-based) verification techniques. Further, all counterexamples produced from verifying sub-accelerators corresponded to real accelerator-level bugs. Compared with traditional simulation-based verification, we report a $\sim5X$ improvement in verification effort on the average, with a $\sim9X$ improvement for the large, industrial NVDLA designs. The overhead of inserting our annotations for dFC can be small compared to what designers already insert to optimize the design. For ISmartDNN, for example, the total number of annotations is 304, which is $2.8\%$ of the total lines of code of the design. In the code of the HLS designs we considered, pragmas amount to~11\% on average.  
    We also observe a $\sim60X$ improvement in average verification runtime compared to conventional simulations.\footnote{The conventional verification effort for NVDLA was based on start and end commit dates in its nv\_small Github repository. The conventional verification runtime for NVDLA, ISmartDNN, and dnn HAs were obtained by running the available simulation tests on our platform. The remaining runtime and effort information were provided by the designers.}

%%%%%%%%%%%%%%%%%%%%%%%%%%%%%%%%%%%%%%%%%%%%%%%%%%%%%%%%%%%%%%%%%
%%%%%%%%%%%%%%%%%%%%%%%%%%%%%%%%%%%%%%%%%%%%%%%%%%%%%%%%%%%%%%%%%

\section{Conclusion}
\label{sec:conclusion}

Our theoretical and experimental results demonstrate that \aqedd is an effective and practical approach for verification of large non-interfering LCAs. \aqedd exploits \aqed principles to decompose a given HA design into sub-accelerators such that \aqed can be naturally applied to the sub-accelerators. \aqedd is especially attractive for HLS-based HA design flows. \aqedd creates several promising research directions: 
\begin{itemize}

\item Extension of our \aqedd experiments to include interfering LCAs (already covered by our theoretical results).

\item Automation of dFC annotations via HLS techniques.

\item dFC approaches beyond our current implementation.

\item Further \aqedd scalability using abstraction.

\item Extension of \aqedd beyond sequential (C/C++) code to include concurrent programs. 

\item Effectiveness of \aqedd for RTL designs (without converting them to sequential C/C++).

\item  Applicability of \aqedd beyond functional bugs (e.g., to detect security vulnerabilities in HAs).

\item Comparison of \aqedd and conventional decomposition.

\item Identifying conditions under which \aqedd is sound.

\end{itemize}

% if have a single appendix:
%\appendix[Proof of the Zonklar Equations]
% or
%\appendix  % for no appendix heading
% do not use \section anymore after \appendix, only \section*
% is possibly needed

% use appendices with more than one appendix
% then use \section to start each appendix
% you must declare a \section before using any
% \subsection or using \label (\appendices by itself
% starts a section numbered zero.)
%

%\appendices
%\section{Proof of the First Zonklar Equation}
%Appendix one text goes here.
%
%% you can choose not to have a title for an appendix
%% if you want by leaving the argument blank
%\section{}
%Appendix two text goes here.

% use section* for acknowledgment
\section*{Acknowledgment}

This work was supported by the DARPA POSH program (grant FA8650-18-2-7854), NSF (grant A\#:1764000), and the Stanford SystemX Alliance. We thank Prof. David Brooks, Thierry Tambe and Prof. Gu-Yeon Wei from Harvard University, and Kartik Prabhu and Prof. Priyanka Raina from Stanford University for their design contributions in our experiments.

% Can use something like this to put references on a page
% by themselves when using endfloat and the captionsoff option.
\ifCLASSOPTIONcaptionsoff
  \newpage
\fi

\IEEEtriggeratref{57}

% Generated by IEEEtran.bst, version: 1.14 (2015/08/26)

%%%%%%%%%%%%%%%%%%%%%%%%%%%%%%%%%%%%%%%%%%%%%%%%%%%%%%%%%
%%%%%%%%%%%%%%%%%%%%%%%%%%%%%%%%%%%%%%%%%%%%%%%%%%%%%%%%%
%%%%%%%%%%%%%%%%%%%%%%%%%%%%%%%%%%%%%%%%%%%%%%%%%%%%%%%%%
%%%%%%%%%%%%%%%%%%%%%%%%%%%%%%%%%%%%%%%%%%%%%%%%%%%%%%%%%
%%%%%%%%%%%%%%%%%%%%%%%%%%%%%%%%%%%%%%%%%%%%%%%%%%%%%%%%%
%%%%%%%%%%%%%%%%%%%%%%%%%%%%%%%%%%%%%%%%%%%%%%%%%%%%%%%%%
%%%%%%%%%%%%%%%%%%%%%%%%%%%%%%%%%%%%%%%%%%%%%%%%%%%%%%%%%
%%%%%%%%%%%%%%%%%%%%%%%%%%%%%%%%%%%%%%%%%%%%%%%%%%%%%%%%%
\newpage 

\begin{appendices}

%%%%%%%%%%%%%%%%%%%%%%%%%%%%%%%%%%%%%%%%%%%%%%%%%%%%%%%%%
%%%%%%%%%%%%%%%%%%%%%%%%%%%%%%%%%%%%%%%%%%%%%%%%%%%%%%%%%
  
  \section{Proofs}

\setcounter{lemma}{0}
\begin{lemma}[Soundness of FC]
If an HA is functionally correct, then it is functionally consistent.
\end{lemma}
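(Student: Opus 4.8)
The plan is to show that functional correctness forces \emph{every} output batch in a run to agree with the specification $\specout$---not just the final one that Definition~\ref{def:corr:NEW:MODEL} literally mentions---and then to derive functional consistency by chaining two such specification equalities. First I would observe that although Definition~\ref{def:corr:NEW:MODEL} constrains only $\outp_n$, it may be applied to \emph{every prefix} $\langle \inp_1, \ldots, \inp_i \rangle$ of the input sequence. To make this rigorous I would invoke a prefix property of $\generatedstateseq$: by the recursive structure of Definition~\ref{def:successor:states}, the state sequence generated from the prefix $\langle \inp_1, \ldots, \inp_i \rangle$ coincides with the initial segment of $\stateseq = \generatedstateseq(\inpseq, \state_0)$ up to and including the $i$-th final state. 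Consequently the symbolic initial state $\initstateargs{i}$ (in particular its relevant component $\getrelstate(\initstateargs{i})$) and the output batch $\outp_i$ are identical whether computed from the prefix or from the full sequence. Applying functional correctness to this prefix then yields $\outp_i(j) = \specout(\inp_i(j), \getrelstate(\initstateargs{i}))$ for all $j$, i.e.\ per-batch correctness for every $i \in [1,n]$.

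Next I would assume the hypotheses of Definition~\ref{def:weak:func:cons:NEW:MODEL}, fixing indices $i \in [1,n]$ and $j, j' \in [1,b]$ with $\inp_i(j) = \inp_n(j')$ and $\getrelstate(\initstateargs{i}) = \getrelstate(\initstateargs{n})$. Using the per-batch correctness just established for batch $i$, together with functional correctness applied directly to the full sequence for batch $n$, I obtain $\outp_i(j) = \specout(\inp_i(j), \getrelstate(\initstateargs{i}))$ and $\outp_n(j') = \specout(\inp_n(j'), \getrelstate(\initstateargs{n}))$. Since the two assumed equalities make the arguments to $\specout$ identical on both sides, the right-hand sides coincide, and hence $\outp_i(j) = \outp_n(j')$, which is exactly the conclusion required for functional consistency.

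The main obstacle is the prefix property invoked in the first step: precisely establishing that running a prefix of the input sequence reproduces the corresponding initial segment of the full state sequence, so that $\getrelstate(\initstateargs{i})$ and $\outp_i$ are unambiguous. This follows from a short induction on the number of input batches, unwinding the recursion in Definition~\ref{def:successor:states} (the base case $|\inpseq| = 1$ is immediate, and the inductive step uses the fact that the hand-off state $\state_{\mathit{i}} = (\initstatectrl, \getmemorystate(\state_{\mathit{f}}))$ depends only on the already-processed batches). Everything after this reduction is a one-line substitution into $\specout$, so the lemma is essentially immediate once the prefix behavior of $\generatedstateseq$ is in hand.
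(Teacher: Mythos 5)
Your proposal is correct and takes essentially the same route as the paper's own proof: apply functional correctness to get $\outp_i(j) = \specout(\inp_i(j), \getrelstate(\initstateargs{i}))$ and $\outp_n(j') = \specout(\inp_n(j'), \getrelstate(\initstateargs{n}))$, then chain the two equalities using the assumed equality of inputs and relevant states. The only difference is that you explicitly state and justify (by induction on the recursion in Definition~\ref{def:successor:states}) the prefix property of $\generatedstateseq$ needed to apply correctness to the intermediate batch $i$, a step the paper's proof invokes implicitly; this makes your write-up slightly more rigorous than the original.
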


\begin{proof}
Let $\acc$ be an HA that is functionally correct, and let
$\inpseq$ be a sequence of input batches.  Assume
$\inpseq = \langle \inp_1, \ldots, \inp_n \rangle$, $\stateseq = \generatedstateseq(\inpseq, \state_0)$, $\initstateseq = \getseqinitstates(\stateseq) = \langle \initstateargs{1}, \ldots, \initstateargs{n} \rangle$, and
$\outpseq = \getoutputstate(\getseqfinalstates(\stateseq)) = \langle \outp_1, \ldots, \outp_n \rangle$.  Further, let $i\in[1,n]$ and $j,j'\in[1,b]$, and suppose $\inp_i(j)=\inp_n(j')$ and $\getrelstate(\initstateargs{i}) = \getrelstate(\initstateargs{n})$.
By functional correctness we have
\[\outp_i(j) = \specout(\inp_i(j),
\getrelstate(\initstateargs{i})).\]
Similarly, we also have
\[\outp_n(j') = \specout(\inp_n(j'),
\getrelstate(\initstateargs{n})).\]
But $\inp_i(j) = \inp_{n}(j')$ and
$\getrelstate(\initstateargs{i}) = \getrelstate(\initstateargs{n})$,
so it follows that $\outp_i(j)=\outp_n(j')$.
\end{proof}

\begin{lemma}[Soundness of Strong FC]
If an HA is functionally correct, then it is strongly functionally consistent.
\end{lemma}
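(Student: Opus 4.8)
The plan is to reduce strong functional consistency to two independent applications of functional correctness, one for each of the two single-batch executions, and then combine them using the hypotheses $\inp(j) = \inp'(j')$ and $\getrelstate(\state_0) = \getrelstate(\state_0')$. Concretely, I would first establish the intermediate claim: if $\state$ is \emph{any} reachable initial state, $\inp$ an input batch, and $\outp = \getoutputstate(\getseqfinalstates(\generatedstateseq(\langle \inp \rangle, \state)))$, then $\outp(j) = \specout(\inp(j), \getrelstate(\state))$ for every $j \in [1,b]$. Given this claim, the lemma is immediate: applying it to $(\state_0, \inp)$ gives $\outp(j) = \specout(\inp(j), \getrelstate(\state_0))$, applying it to $(\state_0', \inp')$ gives $\outp'(j') = \specout(\inp'(j'), \getrelstate(\state_0'))$, and since the two arguments to $\specout$ coincide by hypothesis, we conclude $\outp(j) = \outp'(j')$.

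The substance therefore lies in the intermediate claim. Its difficulty is that functional correctness (Definition~\ref{def:corr:NEW:MODEL}) constrains only the \emph{last} batch of a sequence started in a \emph{concrete} initial state of $\initstates$, whereas $\state_0$ is merely reachable. To bridge this gap I would unfold Definition~\ref{def:reachable}: there is a concrete initial state $\state_{00} \in \initstates$ and a sequence $\inpseq_0 = \langle \inp_1, \ldots, \inp_m \rangle$ with $\state \in \generatedstateseq(\inpseq_0, \state_{00})$. Because $\state$ is an initial state, its control component is $\initstatectrl$; and by the standing assumption that $\getctrlstate(\state_i) \neq \initstatectrl$ for $i>0$ within each batch execution, the only states of a generated sequence carrying control component $\initstatectrl$ are the per-batch initial states, so $\state = \initstateargs{k}$ for some $k$. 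I then run the sequence $\inpseq = \langle \inp_1, \ldots, \inp_{k-1}, \inp \rangle$ from $\state_{00}$. Since $\transfunc$ is deterministic and the first $k-1$ batches coincide with those of $\inpseq_0$, the $k$-th initial state of $\generatedstateseq(\inpseq, \state_{00})$ has the same memory state as $\state$ except for its input component, which Definition~\ref{def:successor:states} overwrites with $\inp$; overwriting the input leaves the relevant component untouched, so that last initial state has relevant state $\getrelstate(\state)$. Functional correctness applied to $\inpseq$, whose final batch is $\inp$, then yields exactly $\outp(j) = \specout(\inp(j), \getrelstate(\state))$. The degenerate case $\state \in \initstates$ is the instance $k=1$, using the one-element sequence $\langle \inp \rangle$ directly.

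The main obstacle is precisely this embedding step: justifying that a single-batch run from an arbitrary reachable initial state can be realized as the final batch of a sequence started in a concrete initial state, with the relevant state preserved. Two facts carry the weight — that overwriting the input component of a state does not alter its relevant component (so the relevant state seen by the last batch is $\getrelstate(\state)$ independent of which input we inject), and that a shared input prefix yields a shared state prefix under the deterministic transition function. Once these are spelled out the remainder is bookkeeping. I would also note that the argument never invokes non-interference: it goes through for interfering HAs as well, because $\specout$ is permitted to depend on the relevant state, which is exactly why the relevant-state equality hypothesis $\getrelstate(\state_0) = \getrelstate(\state_0')$ appears in Definition~\ref{def:strong:fc}.
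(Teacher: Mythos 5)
Your proposal is correct and follows essentially the same route as the paper's proof: both unfold reachability to obtain a concrete initial state $\initstate \in \initstates$ and a witnessing input sequence, replace its final batch with the new input batch, apply functional correctness to the modified sequence, and combine the two resulting $\specout$ equalities via the hypotheses $\inp(j) = \inp'(j')$ and $\getrelstate(\state_0) = \getrelstate(\state_0')$. Your version merely packages this as a reusable intermediate claim and makes explicit two facts the paper leaves implicit (determinism giving a shared state prefix, and input-overwriting preserving the relevant state), which is a presentational refinement rather than a different argument.
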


\begin{proof}
Let $\acc$ be an HA that is functionally correct, and let
$\state_0$ and $\state_0'$ be two arbitrary but fixed reachable initial states,
such that $\getrelstate(\state_0) = \getrelstate(\state_0')$. By
Definition~\ref{def:reachable} of reachable states, there exist
initial states $\initstate, \initstate' \in \initstates$ and sequences
$\inpseq$ and $\inpseq'$ of input batches such that $\state_0 \in
\generatedstateseq(\inpseq, \initstate)$ and $\state_0' \in
\generatedstateseq(\inpseq', \initstate')$. Without loss of
generality, we assume that $\state_0$ and $\state_0'$ are the last
initial states that appear in $\generatedstateseq(\inpseq, \initstate)$
and $\generatedstateseq(\inpseq', \initstate')$, respectively.

Let $\inp$ and $\inp'$ be arbitrary but fixed input batches such that
$\inp(j) = \inp'(j')$ for arbitrary but fixed $j$ and $j'$.

Let $\overline{\inpseq}$ be the result of replacing the last element of $\inpseq$ with $\inp$,
and let $\overline{\inpseq'}$ be the result of replacing the last element of $\inpseq'$ with $\inp'$,
We now apply functional correctness with
respect to the two sequences $\overline{\inpseq}$ and $\overline{\inpseq'}$ of input batches and the initial states $\initstate$
and $\initstate'$.

Let $\finalstate$ and $\finalstate'$ be the final states reached by
executing $\overline{\inpseq}$ from $\initstate$ and by executing
$\overline{\inpseq'}$ from $\initstate'$.
Let $\outp$ and $\outp'$ be the output batches produced in the final
states $\finalstate$ and $\finalstate'$, respectively.

Since $\acc$ is functionally correct, for $\overline{\inpseq}$ we
have $\outp(j) = \specout(\inp(j),
\getrelstate(\state_0))$. Further, for $\overline{\inpseq'}$, we have $\outp'(j') = \specout(\inp'(j'),
\getrelstate(\state_0'))$.

Since we have $\inp(j) = \inp'(j')$ and
$\getrelstate(\state_0) = \getrelstate(\state_0')$,
we must have $\outp(j) = \outp'(j')$. Therefore $\acc$ is strongly functionally
consistent.
\end{proof}

\begin{lemma}[Completeness of SAC + FC + SC]
If an HA is strongly connected and single-action correct and has a bug, then it is
not functionally consistent.
\end{lemma}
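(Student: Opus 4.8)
The plan is to argue directly: assuming strong connectivity (\emph{SC}) and single-action correctness (\emph{SAC}), I take a witness to the bug and construct a single run, starting from a concrete initial state, in which one batch element $(\action,\data)$ is evaluated twice at the \emph{same} relevant state but yields two different outputs. One output will come from the bug and will differ from $\specout$, while the other will be supplied by SAC and will equal $\specout$. This directly violates functional consistency (Definition~\ref{def:weak:func:cons:NEW:MODEL}), whose conclusion forces equal outputs whenever the inputs and relevant states match.

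First I would unpack the bug. A bug is a failure of functional correctness (Definition~\ref{def:corr:NEW:MODEL}): in some run from a concrete initial state, the \emph{final} batch produces at some position $j$ an output that differs from $\specout$. Since $\generatedstateseq$ concatenates single-batch executions, that final batch is exactly the single-batch run from its own initial state; call this state $s^{b}$, put $\staterel := \getrelstate(s^{b})$ and let $(\action,\data)$ be the input at position $j$. Then $s^{b}$ is reachable, $\getrelstate(s^{b}) = \staterel$, and the single-batch output at $j$ is $\neq \specout((\action,\data),\staterel)$. As $s^{b}$ is reachable, $\staterel$ is a reachable relevant state, so I apply SAC (Definition~\ref{def:single:act:corr}) to $(\action,\data)$ and $\staterel$: this gives a reachable initial state $s^{s}$ with $\getrelstate(s^{s}) = \staterel$, a position $p$ with $\getinputstate(s^{s})(p) = (\action,\data)$, and single-batch output equal to $\specout((\action,\data),\staterel)$ at $p$.

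Next I would splice both evaluations into one run. By reachability of $s^{s}$ (Definition~\ref{def:reachable}) there are a concrete initial state and an input sequence whose $i$-th batch starts exactly at $s^{s}$, hence producing the correct output at $p$; I truncate the run after this batch. From the state reached I use SC to route to the reachable state $s^{b}$, and then append $\getinputstate(s^{b})$ as the last (say, $n$-th) batch. Because $s^{b}$ has control $\initstatectrl$ and the injected input equals $\getinputstate(s^{b})$, this last batch reproduces the buggy single-batch execution verbatim, yielding the wrong output at $j$; equivalently, since the relevant state is by definition the only non-input state that can affect outputs, executing $\getinputstate(s^{b})$ from any reachable batch-start state with relevant state $\staterel$ must reproduce that output. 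In the resulting run, batch $i$ and batch $n$ both start with relevant state $\staterel$ and both contain $(\action,\data)$ (at positions $p$ and $j$), yet $\outp_i(p) = \specout((\action,\data),\staterel) \neq \outp_n(j)$. Instantiating Definition~\ref{def:weak:func:cons:NEW:MODEL} with these indices is the desired contradiction to FC.

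The main obstacle is exactly this splicing for \emph{interfering} HAs. FC quantifies over a single run from one concrete initial state, so I cannot simply juxtapose two independent executions as in the strong-FC completeness argument (Lemma~\ref{lem:strong:FC:complete:NEW:MODEL}); I must instead arrive at a batch-start state whose relevant state is again exactly $\staterel$ at the moment the buggy batch is issued. Strong connectivity is precisely what supplies an input sequence from the post-SAC state to the reachable configuration $s^{b}$ (where the relevant state equals $\staterel$). The remaining work is bookkeeping: checking that truncation leaves a valid prefix of $\generatedstateseq$, that the SC-routing stays within reachable states and terminates at $s^{b}$ as a genuine batch boundary, and that injecting $\getinputstate(s^{b})$ leaves the memory of $s^{b}$ unchanged so that the buggy output recurs exactly.
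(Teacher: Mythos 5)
Your proposal is correct and is essentially the paper's own proof, mirrored: the paper leaves the buggy batch in place as the last batch of the original run from its concrete initial state, uses strong connectedness to route from that run's final state to the SAC-supplied initial state, and appends the correct execution last, whereas you execute the SAC batch first and then route to the buggy batch's start state $s^{b}$ so that the bug is re-executed as the final batch. The ingredients and their roles are identical---reachability of the buggy batch's start state (hence of its relevant state), SAC at that relevant state, SC for exact-state splicing, and instantiation of the FC definition comparing an earlier batch against the last one---and your added bookkeeping that the routing must land exactly on $s^{b}$ so the wrong output recurs verbatim is the correct symmetric counterpart of the paper's implicit requirement that its routing end exactly at the SAC state.
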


\begin{proof}
The idea of the proof is similar to the proof of completeness by
Proposition~1 in~\cite{singh:aqed:2020}.

Let $\acc$ be an HA that is strongly connected, single-action
correct, and that has a bug.  That is, by definition of a bug (i.e.,
negation of functional correctness by
Definition~\ref{def:corr:NEW:MODEL}), there exists an initial state
$\state_0 \in \initstates$ and a sequence $\inpseq$ of input batches
such that $\outp_n(j_{\mathit{bug}}) \not =
\specout(\inp_n(j_{\mathit{bug}}), \getrelstate(\initstateargs{n}))$
for some arbitrary but fixed $1 \leq j_{\mathit{bug}} \leq
\batchsize$, where $\inpseq = \langle \inp_1, \ldots, \inp_n \rangle$,
$\stateseq = \generatedstateseq(\inpseq, \state_0)$, $\initstateseq =
\getseqinitstates(\stateseq) = \langle \initstateargs{1}, \ldots,
\initstateargs{n} \rangle$, $\finalstateseq =
\getseqfinalstates(\stateseq) = \langle \finalstateargs{1}, \ldots,
\finalstateargs{n} \rangle$, and $\outpseq =
\getoutputstate(\finalstateseq) = \langle \outp_1, \ldots, \outp_n
\rangle$.

Note that $\initstateargs{n}$ is the initial state in which,
given input batch $\inp_n$, the computation started that caused the incorrect
output batch $\outp_n$ to be produced in the final state $\finalstateargs{n}$.

Since $\initstateargs{n} \in \generatedstateseq(\inpseq, \state_0)$,
by Definition~\ref{def:reachable} $\initstateargs{n}$ is reachable,
and hence also its relevant state $\getrelstate(\initstateargs{n})$ is
reachable.

Since $\getrelstate(\initstateargs{n})$ is reachable, by single-action
correctness there exists some reachable initial state $\state \in \states$
with $\getrelstate(\initstateargs{n}) =
\getrelstate(\state)$ such that executing input batch $\inp_n$ in
$\state$ produces a correct result for the batch element
$\inp_n(j_{\mathit{bug}})$ in the respective final state $\finalstate'$
of that execution.

By strong connectedness, since $\finalstateargs{n}$ and $\state$ are
both reachable, there exists a sequence $\inpseq'$ of input batches
that transitions $\acc$ from $\finalstateargs{n}$ to $\state$.

In state $\state$, we execute input batch $\inp_n$. By single-action
correctness, the result of executing $\inp_n$ produces a correct
output batch $\outp$ such that $$\outp(j_{\mathit{bug}}) =
\specout(\inp_n(j_{\mathit{bug}}), \getrelstate(\state)).$$ Since
$\getrelstate(\initstateargs{n}) = \getrelstate(\state)$, we have
$$\specout(\inp_n(j_{\mathit{bug}}), \getrelstate(\state)) =
\specout(\inp_n(j_{\mathit{bug}}), \getrelstate(\initstateargs{n})).$$

Finally, since $$\outp_n(j_{\mathit{bug}}) \not =
\specout(\inp_n(j_{\mathit{bug}}), \getrelstate(\initstateargs{n}))$$
and $$\outp(j_{\mathit{bug}}) = \specout(\inp_n(j_{\mathit{bug}}),
\getrelstate(\state)),$$ also $$\outp_n(j_{\mathit{bug}}) \not =
\outp(j_{\mathit{bug}}),$$ hence $\acc$ is not functionally consistent.  
\end{proof}

\begin{lemma}[Completeness of SAC + Strong FC]
If an HA is single-action correct and has a bug, then it is
not strongly functionally consistent.
\end{lemma}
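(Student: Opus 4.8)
The plan is to prove the contrapositive directly: assuming $\acc$ is single-action correct and has a bug, I will exhibit two reachable initial states and two input batches that violate the conclusion of strong functional consistency (Definition~\ref{def:strong:fc}). The key observation---which distinguishes this from Lemma~\ref{lem:strong:FC:complete:NEW:MODEL:connect}---is that strong FC quantifies over an \emph{arbitrary} pair of reachable initial states. Hence there is no need to stitch the buggy execution and the SAC witness together into a single run via a connecting input sequence, and the strong-connectedness hypothesis can be dropped entirely.

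First I would unfold the definition of a bug. Since $\acc$ is not functionally correct, Definition~\ref{def:corr:NEW:MODEL} supplies a concrete initial state $\state_0 \in \initstates$, a sequence $\inpseq = \langle \inp_1, \ldots, \inp_n \rangle$ of input batches, and an index $j_{\mathit{bug}} \in [1,\batchsize]$ such that, writing $\stateseq = \generatedstateseq(\inpseq, \state_0)$, $\initstateseq = \getseqinitstates(\stateseq)$, and $\outpseq = \getoutputstate(\getseqfinalstates(\stateseq))$, we have $\outp_n(j_{\mathit{bug}}) \neq \specout(\inp_n(j_{\mathit{bug}}), \getrelstate(\initstateargs{n}))$. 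Since $\initstateargs{n}$ occurs in $\generatedstateseq(\inpseq, \state_0)$, it is a reachable initial state by Definition~\ref{def:reachable}, and thus $\getrelstate(\initstateargs{n})$ is a reachable relevant state.

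Next I would invoke single-action correctness for the batch element $(\action,\data) := \inp_n(j_{\mathit{bug}})$ and the reachable relevant state $\staterel := \getrelstate(\initstateargs{n})$. This yields a reachable initial state $\state$ together with an index $j^{*}$ satisfying $\getinputstate(\state)(j^{*}) = (\action,\data)$, $\getrelstate(\state) = \staterel$, and, crucially, a \emph{correct} output at that position: $\getoutputstate(\getseqfinalstates(\transfuncseq(\state)))(j^{*}) = \specout((\action,\data),\staterel)$.

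Finally I would instantiate strong FC on the two reachable initial states $\initstateargs{n}$ and $\state$ with input batches $\inp_n$ and $\getinputstate(\state)$, respectively. The antecedents hold: the positions agree, since $\inp_n(j_{\mathit{bug}}) = (\action,\data) = \getinputstate(\state)(j^{*})$, and the relevant states agree, since $\getrelstate(\initstateargs{n}) = \staterel = \getrelstate(\state)$. If $\acc$ were strongly functionally consistent, its conclusion would force $\outp_n(j_{\mathit{bug}})$ to equal the position-$j^{*}$ output of the second run, namely $\specout((\action,\data),\staterel) = \specout(\inp_n(j_{\mathit{bug}}), \getrelstate(\initstateargs{n}))$, contradicting the bug inequality. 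Hence $\acc$ is not strongly functionally consistent. The only delicate point I anticipate is the bookkeeping around the second run: I must confirm that the single-batch execution used inside the strong FC instance shares its final state (and therefore its position-$j^{*}$ output) with the execution $\transfuncseq(\state)$ appearing in the SAC guarantee, which holds because $\state$ is an initial, hence non-final, state. Beyond that matching step, the logical skeleton is immediate once the quantifier structure of strong FC is exploited.
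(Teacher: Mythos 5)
Your proof is correct and follows essentially the same route as the paper's: extract the buggy reachable initial state $\initstateargs{n}$ from the failure of functional correctness, invoke SAC at the reachable relevant state $\getrelstate(\initstateargs{n})$ to obtain a second reachable initial state $\state$ whose execution produces a correct output, and instantiate strong FC on this pair of reachable initial states to contradict the bug inequality. If anything, your version is slightly tighter than the paper's: by feeding the SAC witness's own input batch $\getinputstate(\state)$ (at its own index $j^{*}$) into the strong FC instance, you avoid the paper's looser step of ``executing $\inp_n$ in $\state$''---which is not literally what Definition~\ref{def:single:act:corr} guarantees, since the witness's input batch need only agree with $\inp_n$ at one position---and your closing remark that $\state$ is initial, hence non-final, explicitly discharges the bookkeeping identification of the two single-batch executions that the paper leaves tacit.
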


\begin{proof}  
  Let $\acc$ be an HA that is single-action
correct and that has a bug. That is, by definition of a bug (i.e., negation of functional correctness by Definition~\ref{def:corr:NEW:MODEL}), there exists 
an initial state $\state_I \in
\initstates$ and a sequence $\inpseq$ of input batches 
such that $\outp_n(j_{\mathit{bug}}) \not = \specout(\inp_n(j_{\mathit{bug}}),
\getrelstate(\initstateargs{n}))$ for some arbitrary but fixed $1 \leq j_{\mathit{bug}} \leq \batchsize$, 
where $\inpseq = \langle \inp_1, \ldots, \inp_n \rangle$, $\stateseq =
\generatedstateseq(\inpseq, \state_I)$, $\initstateseq =
\getseqinitstates(\stateseq) = \langle \initstateargs{1}, \ldots,
\initstateargs{n} \rangle$, $\finalstateseq =
\getseqfinalstates(\stateseq) = \langle \finalstateargs{1}, \ldots,
\finalstateargs{n} \rangle$, and $\outpseq =
\getoutputstate(\finalstateseq) = \langle \outp_1, \ldots, \outp_n
\rangle$.

We must show that $\acc$ is not strongly FC, i.e., using notation from
Definition~\ref{def:strong:fc}, for some $j, j', \inp, \inp',
\state_0$ and $\state_0'$ with $\inp(j) = \inp'(j')$ and
$\getrelstate(\state_0) = \getrelstate(\state_0')$, we have $\outp(j)
\not = \outp'(j')$. Without loss of generality, assume $j =
j' = j_{\mathit{bug}}$. 

Note that $\initstateargs{n}$ is the initial state in which,
given input batch $\inp_n$, the computation started that caused the incorrect
output batch $\outp_n$ to be produced in the final state $\finalstateargs{n}$.
  
Since $\initstateargs{n} \in \generatedstateseq(\inpseq, \state_I)$,
by Definition~\ref{def:reachable} $\initstateargs{n}$ is reachable,
and hence also its relevant state $\getrelstate(\initstateargs{n})$ is
reachable.

Since $\initstateargs{n}$ is a reachable initial state,  
let $\state_0 = \initstateargs{n}$, which is possible by Definition~\ref{def:strong:fc}, and let $\inp = \inp_n$, i.e.,
$\state_0$ is the state for which $\acc$ produces an incorrect output
for input batch $\inp$ with batch element $\inp(j_\mathit{bug})$.

Further, let $\stateseq =
\generatedstateseq(\langle \inp \rangle, \state_0)$, $\finalstateseq =
\getseqfinalstates(\stateseq) = \langle \finalstate \rangle$, and
$\outpseq = \getoutputstate(\finalstateseq) = \langle \outp \rangle$.
Due to the bug, we have $\outp(j) \not = \specout(\inp(j),
\getrelstate(\state_0))$. 

Since $\getrelstate(\initstateargs{n})$ is reachable, by single-action
correctness (SAC) there exists some reachable initial state $\state
\in \states$ with $\getrelstate(\initstateargs{n}) =
\getrelstate(\state)$ such that executing input batch $\inp_n$ in
$\state$ produces a correct result for the batch element
$\inp_n(j_{\mathit{bug}})$ in the respective final state
$\finalstate'$ of that execution (without loss of generality, we
assume that that batch element occurs at index $j_{\mathit{bug}}$ by
SAC).

Towards showing inconsistency, let $\inp' = \inp_n$ and $\state_0' =
\state$ be a reachable initial state, where 
state $\state$ exists by SAC, such that $\getrelstate(\state_0')
= \getrelstate(\state) = \getrelstate(\initstateargs{n})$. 
Further, let $\stateseq' = \generatedstateseq(\langle
\inp' \rangle, \state_0')$, $\finalstateseq' =
\getseqfinalstates(\stateseq') = \langle \finalstate' \rangle$, and
$\outpseq' = \getoutputstate(\finalstateseq') = \langle \outp'
\rangle$.

By single-action correctness and since $j =
j' = j_{\mathit{bug}}$, we have $$\outp'(j') = \specout(\inp'(j'),
\getrelstate(\state_0')).$$ 

Since $\state_0 = \initstateargs{n}$, also $\getrelstate(\state_0) =
\getrelstate(\initstateargs{n})$. Further, since
$\getrelstate(\state_0') = \getrelstate(\state) =
\getrelstate(\initstateargs{n})$, also $$\getrelstate(\state_0) =
\getrelstate(\state_0')$$ by transitivity.

Since   $\getrelstate(\state_0) = \getrelstate(\state_0')$ and $\inp(j)
= \inp'(j')$, also $$\specout(\inp(j), \getrelstate(\state_0)) =
\specout(\inp'(j'), \getrelstate(\state_0')).$$

Further, since $$\outp(j) \not = \specout(\inp(j),
\getrelstate(\state_0))$$ due to the bug and also $$\specout(\inp(j),
\getrelstate(\state_0)) = \specout(\inp'(j'),
\getrelstate(\state_0')),$$
and $$\outp'(j') = \specout(\inp'(j'),
\getrelstate(\state_0')),$$ 
we have $$\outp(j) \not = \outp'(j')$$ and hence
$\acc$ is not strongly functionally consistent.
\end{proof}

\begin{lemma}[Functional Composition and Strong FCD]
Let $\acc_0 = \acc_2 \circ \acc_1$. 
If both $\acc_1$
and $\acc_2$ are strongly FCD then $\acc_0$ is
strongly FCD.
\end{lemma}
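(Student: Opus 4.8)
The plan is to chain the strong FCD properties of the two sub-accelerators along the two phases of an execution of $\acc_0 = \acc_2 \circ \acc_1$. First I would fix reachable initial states $\state_0, \state_0'$ of $\acc_0$, input batches $\inp, \inp'$, and indices $j, j' \in [1, \batchsize]$ satisfying the premise of Definition~\ref{def:strong:fc}, i.e., $\inp(j) = \inp'(j')$ and $\getrelstate(\state_0) = \getrelstate(\state_0')$. By Definition~\ref{def:acc:decomp2}, the run $\generatedstateseq(\langle \inp \rangle, \state_0)$ decomposes into an $\acc_1$-phase (control in $\statesctrlargs{1}$, driven by $\transfunc_1$) and, after the boundary transition at $\finalstatectrlargs{1}$ that applies $\alpha$, an $\acc_2$-phase (control in $\statesctrlargs{2}$, driven by $\transfunc_2$). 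I would name the intermediate data: let $\outp_1$ and $r_1$ be the output batch and relevant state in the $\acc_1$-phase final state, and let $\outp$ and $r$ be those in the $\acc_2$-phase final state. Since $\outputs_0 = \outputs_2$ and the relevant state spaces are shared (Definition~\ref{def:composable}(iv)), $\outp$ and $r$ are exactly $\acc_0$'s output and final relevant state. Define $\outp_1', r_1', \outp', r'$ analogously for the primed run.

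The argument is then two applications of the hypotheses. Step one treats the $\acc_1$-phases as genuine $\acc_1$-runs on $\inp$ and $\inp'$; their initial relevant states agree by hypothesis and $\inp(j) = \inp'(j')$, so strong FCD of $\acc_1$ gives both $\outp_1(j) = \outp_1'(j')$ and, crucially, $r_1 = r_1'$. Step two treats the $\acc_2$-phases as $\acc_2$-runs whose input batches are $\outp_1, \outp_1'$ (by $\alpha$, the output of $\acc_1$ becomes the input of $\acc_2$) and whose initial relevant states are $r_1, r_1'$ (again by $\alpha$, which preserves the relevant state). Because step one already established $r_1 = r_1'$ and $\outp_1(j) = \outp_1'(j')$, strong FCD of $\acc_2$ applies with the same indices $j, j'$ and yields $\outp(j) = \outp'(j')$ together with $r = r'$. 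These are precisely the two conjuncts Definition~\ref{def:strong:fcd} demands of $\acc_0$. I would emphasize that the FCD strengthening is what makes this go through: plain strong FC of $\acc_1$ would deliver $\outp_1(j) = \outp_1'(j')$ but not $r_1 = r_1'$, and without the latter the relevant-state precondition needed to invoke $\acc_2$'s property is missing. This is exactly why Lemma~\ref{prop:fc:decomp} is phrased for FCD rather than FC.

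The hard part will be the reachability bookkeeping (and, to a lesser extent, the $\alpha$-representation of memory), not the chaining. To invoke strong FCD of $\acc_1$ I need $\state_0, \state_0'$, viewed as $\acc_1$-states, to be reachable in $\acc_1$; this is helped by the fact that $\acc_0$ and $\acc_1$ share the same initial control state and the same concrete initial states (both equal to $\{\initstatectrlargs{1}\} \times \states_{m,I,1}$). More delicate is the $\acc_2$-phase: its initial states $(\initstatectrlargs{2}, \alpha(\statemem))$ occur inside $\acc_0$-runs, hence are reachable in $\acc_0$, but I must relate this to reachability in $\acc_2$ itself. I would address this either by passing to the over-approximation used in the implementation, where every initial state is admissible (so the reachability hypotheses of the FCD definitions are vacuously met), or by showing that the relevant states feeding $\acc_2$ lie in $\acc_2$'s reachable relevant-state set, using $\statesrelargs{1} = \statesrelargs{2}$. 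A final, purely technical point is to make precise the convention---left slightly implicit in Definition~\ref{def:acc:decomp2}---that the memory is carried in the $\statesmemargs{2}$ view during the $\acc_2$-phase, so that $\getoutputstate$ and $\getrelstate$ of the terminal $\acc_0$-state really do return $\acc_2$'s output and relevant state; once $\alpha$ is checked to be a relevant-state-preserving bijection routing $\acc_1$'s output to $\acc_2$'s input, these identifications are routine.
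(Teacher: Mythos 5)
Your proposal follows essentially the same route as the paper's proof sketch: apply strong FCD of $\acc_1$ to the first phase to obtain equality of the intermediate output batches and of the relevant states, then feed those two equalities in as the antecedent for strong FCD of $\acc_2$, whose output batch and final relevant state (via $\alpha$ and $\outputs_0 = \outputs_2$) are exactly those of $\acc_0$. You are in fact somewhat more careful than the paper, whose sketch silently treats the phase-initial states as reachable initial states of $\acc_1$ and $\acc_2$ respectively; the reachability bookkeeping you flag (and resolve via the shared concrete initial states and the symbolic-initial-state over-approximation) is simply glossed over there.
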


\begin{proof}[Proof Sketch]

  Assume that $\acc_1$ and $\acc_2$ are strongly FCD.
  
  \begin{enumerate}

  \item To show that $\acc_0$ is strongly FCD, assume that the
    antecedent in the implication of Definition~\ref{def:strong:fcd}
    holds for $\acc_0$, i.e., we have $\inp_0(j_0) = \inp_0'(j_0')
    \wedge \getrelstate(\state_0) = \getrelstate(\state_0')$, where
    $\inp_0, \inp_0', j_0, j_0', \state_0$, and $\state_0'$ are
    arbitrary but fixed. We must show that the conclusion holds, i.e.,
    $\outp_0(j_0) = \outp_0'(j_0') \wedge
    \getrelstate(\finalstateargs{0}) =
    \getrelstate(\finalstateargs{0}')$, where $\outp_0, \outp_0'$ and
    $\finalstateargs{0}, \finalstateargs{0}'$ are the output batches
    and final states generated by $\acc_0$ following
    Definition~\ref{def:strong:fcd}. \label{proof_step_1}
    
  \item Given $\inp_0(j_0) = \inp_0'(j_0') \wedge
    \getrelstate(\state_0) = \getrelstate(\state_0')$,
    and since $\acc_0 = \acc_2 \circ \acc_1$,
    let $\inp_1, \inp_1', j_1, j_1', \state_1, \state_1'$ of $\acc_1$
    such that $\inp_1 = \inp_0$, $j_1 = j_0$, $\inp_1' = \inp_0'$,
    $j_1' = j_0'$, $\getrelstate(\state_1) = \getrelstate(\state_0)$,
    $\getrelstate(\state_1') = \getrelstate(\state_0')$. By
    transitivity and since $\getrelstate(\state_0) =
    \getrelstate(\state_0')$, also $\getrelstate(\state_1) =
    \getrelstate(\state_1')$. \label{proof_step_2}
    
  \item Due to $\inp_0(j_0) = \inp_0'(j_0') \wedge
    \getrelstate(\state_0) = \getrelstate(\state_0')$ and
    step~\ref{proof_step_2}, we have $\inp_1(j_1) = \inp_1'(j_1')
    \wedge \getrelstate(\state_1) = \getrelstate(\state_1')$, i.e.,
    the antecedent of the implication of
    Definition~\ref{def:strong:fcd} holds for $\acc_1$.
    \label{proof_step_3}
    
  \item Due to step~\ref{proof_step_3} and since $\acc_1$ is strongly FCD, we conclude $\outp_1(j_1) = \outp_1'(j_1') \wedge \getrelstate(\finalstateargs{1}) = \getrelstate(\finalstateargs{1}')$, where $\outp_1, \outp_1'$ and
    $\finalstateargs{1}, \finalstateargs{1}'$ are the output batches
    and final states generated by $\acc_1$ following
    Definition~\ref{def:strong:fcd}.
    \label{proof_step_4}
    
  \item 
    Let $\state_2$ and $\state_2'$ be initial states of $\acc_2$ 
    obtained from $\finalstateargs{1}$ and $\finalstateargs{1}'$, respectively, such
    that $\getrelstate(\state_2) = \getrelstate(\finalstateargs{1})$
    and $\getrelstate(\state_2') =
    \getrelstate(\finalstateargs{1}')$. By transitivity and
    step~\ref{proof_step_4}, also $\getrelstate(\state_2) =
    \getrelstate(\state_2')$.
\label{proof_step_5}

\item Let $\inp_2, \inp_2'$, $j_2, j_2'$ such that 
   $\inp_2 = \outp_1$, $\inp_2' = \outp_1'$, $j_2
  = j_1$, and $j_2' = j_1'$.\label{proof_step_6}
  
\item Due to steps~\ref{proof_step_4} and~\ref{proof_step_6}, we have
  $\inp_2(j_2) = \inp_2'(j_2') \wedge \getrelstate(\state_2) =
  \getrelstate(\state_2')$, i.e., the antecedent of the implication of
  Definition~\ref{def:strong:fcd} holds for $\acc_2$.
  \label{proof_step_7}
  
\item Due to step~\ref{proof_step_7} and since $\acc_2$ is strongly
  FCD, we conclude $\outp_2(j_2) = \outp_2'(j_2') \wedge
  \getrelstate(\finalstateargs{2}) =
  \getrelstate(\finalstateargs{2}')$, where $\outp_2, \outp_2'$ and
  $\finalstateargs{2}, \finalstateargs{2}'$ are the output batches and
  final states generated by $\acc_2$ following
  Definition~\ref{def:strong:fcd}.  \label{proof_step_8}
  
\item Due to step~\ref{proof_step_8}, Definition~\ref{def:composable} and Definition~\ref{def:acc:decomp2}, and since
  $\outputs_2 = \outputs_0$, we have, for
  output batches $\outp_0, \outp_0'$ of $\acc_0$, $\outp_0 = \outp_2$
  and $\outp_0' = \outp_2'$.

  Further, since $j_0 = j_2 = j_1$ and $j_0' =
  j_2' = j_1'$, also $j_0 = j_2$ and $j_0' = j_2'$, and hence
  $\outp_0(j_0) = \outp_0'(j_0')$. Finally, since
  $\getrelstate(\finalstateargs{2}) =
  \getrelstate(\finalstateargs{2}')$ and also
  $\getrelstate(\finalstateargs{0}) =
  \getrelstate(\finalstateargs{0}')$. Hence $\acc_0$ is strongly FCD. \qedhere
  \label{proof_step_9} 
    
  \end{enumerate}
  
\end{proof}

\setcounter{theorem}{0}
\begin{theorem}[Completeness of \aqedd]
Let $\acc_0, \acc_1$, and $\acc_2$ be HAs such that $\acc_0 =
\acc_2 \circ \acc_1$ and $\acc_0$ is single-action correct.
If $\acc_1$ and $\acc_2$ are strongly FCD then $\acc_0$ is
functionally correct.
\end{theorem}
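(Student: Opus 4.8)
The plan is to chain together the compositional and completeness results established earlier in the section; this theorem is really a corollary that ties the machinery together. First I would invoke Lemma~\ref{prop:fc:decomp} (Functional Composition and Strong FCD): since both $\acc_1$ and $\acc_2$ are strongly FCD by hypothesis and $\acc_0 = \acc_2 \circ \acc_1$, this lemma immediately yields that $\acc_0$ is itself strongly FCD. The crucial feature here is that strong FCD (Definition~\ref{def:strong:fcd}) propagates the equality of \emph{relevant} memory states through composition, which is exactly what Lemma~\ref{prop:fc:decomp} was designed to guarantee.

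Next, I would apply Corollary~\ref{cor:strong:fcd:fc} to weaken the property we just obtained: since $\acc_0$ is strongly FCD, it is in particular strongly functionally consistent (strongly FC). This step is trivial but necessary, because the completeness result I want to use next is stated in terms of strong FC rather than strong FCD.

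The final step is to invoke the contrapositive of Lemma~\ref{lem:strong:FC:complete:NEW:MODEL} (Completeness of SAC + Strong FC), which states that a single-action-correct HA with a bug cannot be strongly functionally consistent. Equivalently: if an HA is single-action correct \emph{and} strongly functionally consistent, then it has no bug, i.e.\ it is functionally correct. Since $\acc_0$ is single-action correct by hypothesis and strongly FC by the previous step, this contrapositive delivers the conclusion that $\acc_0$ is functionally correct, completing the argument.

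There is essentially no analytic obstacle remaining at this level: all the real work has been discharged in the supporting lemmas. The only point requiring minor care is the logical form of the last step---Lemma~\ref{lem:strong:FC:complete:NEW:MODEL} is phrased as ``SAC $\wedge$ bug $\Rightarrow$ not strongly FC,'' so I must apply it in contrapositive form, using that ``bug'' is by definition (Definition~\ref{def:corr:NEW:MODEL}) the negation of functional correctness. No reachability or state-sequence reasoning is needed directly in this proof, since those concerns were already absorbed into Lemmas~\ref{lem:strong:FC:complete:NEW:MODEL} and~\ref{prop:fc:decomp}.
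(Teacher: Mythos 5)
Your proposal is correct and follows exactly the paper's own proof: Lemma~\ref{prop:fc:decomp} to get that $\acc_0$ is strongly FCD, Corollary~\ref{cor:strong:fcd:fc} to weaken this to strong FC, and then the contrapositive of Lemma~\ref{lem:strong:FC:complete:NEW:MODEL} (with ``bug'' read as the negation of functional correctness per Definition~\ref{def:corr:NEW:MODEL}) to conclude functional correctness. Nothing is missing, and the logical care you note about the contrapositive step matches the paper's treatment.
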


\begin{proof}
Let $\acc_0, \acc_1$, and $\acc_2$ be HAs such that $\acc_0 =
\acc_2 \circ \acc_1$ and $\acc_0$ is single-action correct.

Assume that $\acc_1$ and $\acc_2$ are strongly FCD.

By Lemma~\ref{prop:fc:decomp}, $\acc_0$ is strongly FCD, and further
by Corollary~\ref{cor:strong:fcd:fc}, $\acc_0$ is strongly FC.

By contraposition of Lemma~\ref{lem:strong:FC:complete:NEW:MODEL}, if
an HA is single-action correct and strongly FC, then it is
functionally correct.

Hence since $\acc_0$ is single-action correct and strongly FC,
$\acc_0$ is also functionally correct.
\end{proof}

%%%%%%%%%%%%%%%%%%%%%%%%%%%%%%%%%%%%%%%%%%%%%%%%%%%%%%%%%
%%%%%%%%%%%%%%%%%%%%%%%%%%%%%%%%%%%%%%%%%%%%%%%%%%%%%%%%%

\section{Generic \aqed}

\begin{lstlisting}[label={algo:aqed}, language=C, caption={Generic \aqedd Module for dFC}, breaklines=true,basicstyle=\scriptsize,     keywordstyle=\color{maroon}\bfseries, commentstyle=\color{darkgreen}\textit, stringstyle=\color{darkpink}\ttfamily, numbers=left, escapeinside={(*}{*)}, numbers=left, xleftmargin=2em,frame=single,framexleftmargin=2em, numberstyle=\tiny]
(*\# DEFINE \bfseries BATCH\_MEM\_IN, BATCH\_MEM\_OUT, IN\_SIZE, OUT\_SIZE, NUM\_OP, IN\_ALLOC\_RULE, OUT\_ALLOC\_RULE*) 

orig_val[(*\bfseries IN\_SIZE*)], dup_val[(*\bfseries IN\_SIZE*)], orig_out[(*\bfseries OUT\_SIZE*)], orig_labeled = 0, dup_labeled = 0, in_ct = 0, out_ct = 0, orig_idx = 0, dup_idx = 0, dup_done = 0, fc_check = 0 //Initialisation

(*\textbf{aqed\_in}*)(in, orig, dup)\{ //monitors inputs
    label_orig = orig & !orig_labeled; match = 1;
    for(i = 0; i < (*\textbf{IN\_SIZE}*); i++){
        if(in[i] != orig_val[i]){
            match = 0;  break;
        }
    }
    label_dup = dup & !dup_labeled & orig_labeled & match;
    if(label_orig){
        orig_labeled = 1;
        orig_idx = in_ct;
        for(i = 0; i < (*\textbf{IN\_SIZE}*); i++)
            orig_val[i] = in[i];
    }
    if(label_dup){
        dup_labeled = 1;
        dup_idx = in_ct;
    }
    in_ct++;
}

(*\textbf{aqed\_out}*)(out)//analyzes outputs
    if(orig_labelled & out_ct == orig_idx & !dup_done){
        for(i = 0; i < (*\textbf{OUT\_SIZE}*); i++){
            orig_out[i] = out[i];
        }
    }
    if(orig_labeled & dup_labeled & out_ct==dup_idx & !dup_done){
        dup_done = 1; fc_check = 1;
        for(i = 0; i < (*\textbf{OUT\_SIZE}*); i++){
            if(orig_out[i] == out[i])
                fc_check = 0; break;
        }
    }
    if(out_ct[i] > dup_idx)
        dup_done = 1;

    out_ct++;
    return {dup_done, fc_check};
}

(*\bfseries aqed\_top*)((*\bfseries BATCH\_MEM\_IN*), orig[(*\bfseries NUM\_OP*)], dup[(*\bfseries NUM\_OP*)]){ //Top function
    
    in_seq = (*\bfseries Parallel2Serial1*)((*\bfseries BATCH\_MEM\_IN*), (*\bfseries IN\_SIZE*), (*\bfseries OP\_SIZE*), (*\bfseries IN\_ALLOC\_RULE*)); //Parallel-to-serial converter
    for(i = 0; i < (*\bfseries NUM\_OP*)*(*\bfseries IN\_SIZE*); i = i + (*\bfseries IN\_SIZE*)){
        (*\bfseries aqed\_in*)(in_seq[i], orig[i], dup[i]);
    }

    (*\bfseries acc*)((*\bfseries BATCH\_MEM\_IN*), (*\bfseries BATCH\_MEM\_OUT*)); //Accelerator function

    out_seq = (*\bfseries Parallel2Serial2*)((*\bfseries BATCH\_MEM\_OUT*), (*\bfseries OUT\_SIZE*), (*\bfseries OP\_SIZE*), (*\bfseries OUT\_ALLOC\_RULE*)); //Parallel-to-serial converter
    for(i = 0; i < (*\bfseries NUM\_OP*)*(*\bfseries OUT\_SIZE*); i = i + (*\bfseries OUT\_SIZE*)){
        {dup_done, fc_check} = (*\bfseries aqed\_out*)(out_seq[i]);
    }
    return {dup_done, fc_check} 
}

\end{lstlisting}

%%%%%%%%%%%%%%%%%%%%%%%%%%%%%%%%%%%%%%%%%%%%%%%%%%%%%%%%%
%%%%%%%%%%%%%%%%%%%%%%%%%%%%%%%%%%%%%%%%%%%%%%%%%%%%%%%%%

\section{Decomposition for RB: \texorpdfstring{\MakeLowercase{d}RB}{dRB}}
\label{app:sec:drb}

Given an HA, responsiveness (RB) checks with respect to some bound $n$
based on Definition~\ref{def:rb} are carried out in \aqedd for
sub-accelerators of HA. These sub-accelerators can be (and often are) different from the
sub-accelerators checked for FC. That is because RB involves a much
simpler check than FC as the only requirement is that \emph{some}
(i.e., not necessarily correct) output is produced within a response
time of $n$ clock cycles. Bound $n$ is a global constant which is used for
RB checks of all sub-accelerators as an upper bound of the
response time. The rationale is that if a sub-accelerator of HA fails
an RB check, then HA also fails an RB check. Hence RB checking is
used for bug-hunting rather than a real-time check for response time
guarantees.

To identify sub-accelerators for RB checking, the C/C++ representation
of the HA is converted to static single assignment form (SSA) and
loops are fully unrolled.  SSA form is not a requirement but makes it
easier to apply the following \emph{sliding window algorithm} to
identify sub-accelerators. The algorithm has two phases, an
\emph{enlarging phase} and a \emph{shrinking phase}.

Sub-accelerators of the given HA are represented by sequences of
contiguous lines of code (LOCs) that are part of the SSA
representation of the accelerator function of HA. A \emph{window $W$
  of code} is used to select such sequences of LOCs. The current size
of $W$ is given by the number of LOCs that fit in $W$ and it changes
dynamically during a run of the algorithm. The inputs and outputs of
the sub-accelerator given by current $W$ are determined as follows.
Variables that are never updated or assigned in
$W$ are considered inputs of the respective sub-accelerator, and
variables in $W$ that are used only to update variables outside of $W$ are considered
outputs. Given $W$, we synthesize the corresponding sub-accelerator
and run RB checks on the resulting RTL representation using symbolic starting states, as certain RB
bugs express only at the RTL level.

Initially, the size of $W$ is set to a small constant that is a
parameter. Further, the first LOC in the initial $W$ is the first LOC
of the accelerator function of HA.

Given this initial configuration, the algorithm enters the enlarging
phase. There, RB checks with respect to bound $n$ are run repeatedly
for the sub-accelerator that corresponds to the LOCs in current
$W$. If an RB check fails, the algorithm terminates and returns the
corresponding counterexample trace. If an RB check passes, then the
size of $W$ is increased by keeping the top boundary of $W$ in place and
moving its bottom boundary downwards by $\delta$ LOCs, which is a
parameter. The next RB check is run with respect to the enlarged $W$.

If during the enlarging phase an RB check times out, where the time
out value is a parameter, then the algorithm enters the shrinking
phase. Timing out indicates that the size of the sub-accelerator in
the current $W$ was prohibitively large and prevented the unrolling of the
sub-accelerator up to $n$ clock cycles in BMC. In the shrinking phase,
RB checks are run repeatedly with respect to the current $W$. If an RB
check times out, then the size of $W$ is reduced by moving
the top boundary of $W$ downwards by $\delta$ LOCs and keeping its bottom boundary in place.
If an RB check fails, then, like in the enlarging phase, the algorithm
terminates and returns the corresponding counterexample trace. If an
RB check passes, then the algorithm enters the enlarging phase again. The
algorithm terminates if $W$ reaches the end of the code.

In practice, due to time outs the sliding window algorithm may not be
able to run passing or failing RB checks for all possible
sub-accelerators of HA. However, as our experimental results show, it
was effective in detecting failing RB checks in sub-accelerators where
RB checks on the full HA timed out.

\section{Results (Extended)}
\label{app:results}
\begin{table}[h]
\centering
\setlength{\tabcolsep}{1.5pt}
\begin{tabular}{|c|c|r|c|r|r|r|} 
\hline
\textbf{Source} & \textbf{Design} & \makecell{\textbf{\#Ver.}} & \textbf{\makecell{Source\\Code}} & \multicolumn{3}{c|}{\textbf{Size\tnote{$\dagger$}}}  \\
 & & & & PLOC & FFs & Gates \\\hline
\makecell{\textbf{AES}\\\textbf{Encryption}\\~\cite{cong:bandwidth:2017}} & \textbf{AES} & 4 & C++ & 220 & 82k & 382k   \\\hline
\textbf{SkyNet}~\cite{zhang2019skynet} & \textbf{ISmartDNN} & 3 & C++ & 1,038 & 9.3M & 41.9M \\\hline
\makecell{\textbf{KAIROS}~\cite{piccolboni2019kairos}} & \textbf{grayscale128} & 5& C & 202 & 28k & 351k  \\\cline{2-7}
 & \textbf{grayscale64} & 5 & C &  202 & 14k & 194k  \\\cline{2-7}
 & \textbf{grayscale32} & 5 & C & 202 & 8k & 106k  \\\cline{2-7}
 & \textbf{mean128} & 5 & C & 262 & 13k & 202k  \\\cline{2-7}
 & \textbf{mean64} & 5 & C & 262 & 6k & 104k  \\\cline{2-7}
 & \textbf{mean32} & 5 & C & 262 & 4k & 54k  \\\hline
\textbf{CHIMERA}~\cite{giordano2021CHIMERA} & \textbf{dnn} & 11 & SystemC & 3,263  & 167k &  2M \\\hline
\textbf{NVDLA}~\cite{nvdla} & \makecell{\textbf{nv\_large}} & 23 & Verilog & 542k &  413k & 16M  \\\cline{2-7}
 & \makecell{\textbf{nv\_small}} & 23 & Verilog & 330k & 47k  & 1M  \\\hline
\textbf{HLSCNN}~\cite{whatmough201916nm} & \textbf{HLSCNN} & 2 & SystemC & 8,280  & 26k & 323k  \\\hline
\textbf{FlexNLP}~\cite{Tambe_isscc2021} & \textbf{FlexNLP} & 9 & SystemC & 6,062 & 30k  & 567k     \\\hline
\makecell{\textbf{Custom}\\\textbf{Design}~\cite{chi2019rapid}} & \textbf{Dataflow}  & 1 & C++ & 104 & 85k & 296k  \\\hline
\textbf{Rosetta}~\cite{zhou2018rosetta} & \textbf{Opticalflow}  & 1 & C++ & 444 & 136k & 555k  \\\hline
\textbf{PQC}~\cite{pqc2019} & \textbf{keypair} & 1 & C & 4,460 & >70M & >200M  \\\hline
\textbf{CHStone}~\cite{hara2008chstone} & \textbf{gsm}  & 1 & C & 395 & 496 & 8.8k  \\\hline
& \multicolumn{2}{c|}{\textbf{Total Versions = 109}} & & & &  \\\hline
\end{tabular}
\caption{Details of designs used in our experiments.}
\label{table:summary_design}
\end{table}

\begin{table*}[t]
\centering
\setlength{\tabcolsep}{1.5pt}
\begin{tabular}{|c|r|r|r|r|r|r|c|r|c|} 
\hline
 \textbf{Design} & \textbf{\makecell{\#Ver.}} &   \multicolumn{2}{c|}{\textbf{\makecell{Conventional\\Simulation}}} &  \multicolumn{5}{c|}{\textbf{\aqedd}} & \\\cline{3-9}         
 &  & \textbf{\makecell{Effort\\(PD)}} & \textbf{\makecell{Runtime\\ (minute)}} & \textbf{\makecell{Effort\\(PD)}} & \multicolumn{4}{c|}{\makecell{\textbf{Runtime (minute)} [min, avg, max] }} & \textbf{Bug Type} \\\cline{6-9}
 & & & \makecell{[min, avg, max]} & & \textbf{\makecell{dFC: intra-batch\\ FC}} & \textbf{\makecell{dFC: strong\\ FC}} &  \textbf{\makecell{dRB}} & \textbf{\makecell{dSAC}} & \\\hline
 \textbf{AES} &  4 &  1 & \makecell[r]{60, 240, 480}  &  2  & \makecell[r]{0.67, 0.97, 1.57}  & timeout & No RB & \makecell[r]{0.08, 0.12, 0.64 } & FC \\\cline{1-7} \cline{9-10}
 \textbf{ISmartDNN} & 3 &  N/A & \makecell[r]{31, 35, 37} & 5  & \makecell[r]{0.05, 0.10, 0.17 } & \makecell[r]{0.13, 0.18, 0.23 } & bugs detected & \makecell[r]{0.03, 0.22, 0.46 } & SAC  \\\cline{1-7} \cline{9-10}
 \textbf{grayscale128} & 5 &  N/A &  N/A & 1 & \makecell[r]{0.01, 0.03, 0.53 } & \makecell[r]{0.04, 0.07, 0.34 } & up to input & \makecell[r]{0.01, 0.04, 0.07 } & FC, SAC \\\cline{1-7} \cline{9-10}
\textbf{grayscale64} & 5 &  N/A &  N/A &  1 & \makecell[r]{0.01, 0.02, 0.04} & \makecell[r]{0.01, 0.02, 0.07 } & sequence & \makecell[r]{<0.01, 0.01, 0.04 } & FC, SAC \\\cline{1-7} \cline{9-10}
\textbf{grayscale32} & 5 &  N/A &  N/A &  1 & \makecell[r]{<0.01, <0.01, 0.01 } & \makecell[r]{<0.01, 0.30, 0.56 } & length  & \makecell[r]{<0.01, <0.01, 0.02 } & FC, SAC \\\cline{1-7} \cline{9-10}
\textbf{mean128} & 5 &  N/A &   N/A &  1 & \makecell[r]{0.02, 0.35, 0.60 } & \makecell[r]{0.10, 0.17, 0.53 } & between & \makecell[r]{0.07, 0.21, 0.35 } & FC, SAC \\\cline{1-7} \cline{9-10}
\textbf{mean64} & 5 & N/A &  N/A &  1 &  \makecell[r]{0.17, 0.38, 0.53 } & \makecell[r]{0.03, 0.13, 0.35 } & 11 \& 24 & \makecell[r]{<0.01, <0.01, 0.07 } & FC, SAC \\\cline{1-7} \cline{9-10}
\textbf{mean32} & 5 &   N/A &  N/A &  1 & \makecell[r]{<0.01, 0.17, 0.57 } & \makecell[r]{<0.01, 0.33, 0.41 } & depending on  & \makecell[r]{<0.01, <0.01, 0.02 } & FC, SAC  \\\cline{1-7} \cline{9-10}
\textbf{dnn} & 11 &  14 & \makecell[r]{20, 25, 30} & 4 &  \makecell[r]{0.02, 0.03, 0.05 } & \makecell[r]{0.08, 0.13, 0.35 } & the design  & \makecell[r]{<0.01, 0.01, 0.06 }    & FC, SAC\\\cline{1-7} \cline{9-10}
\textbf{keypair} & 1 &  14  & \makecell[r]{360, 360, 360}  & 10 &  \multicolumn{2}{c|}{\makecell{Sub-acc of batch size >1 not found.}}  &   & timeout & OOB  \\\cline{1-5} \cline{9-10} 
\textbf{gsm} & 1 &  N/A & \makecell[r]{0.17, 0.17, 0.17} & 2 & \multicolumn{2}{c|}{\makecell{Frama-C~\cite{framac} catches Init. and OOB Bugs}}    &  & timeout & init.  \\\hline
\textbf{nv\_large} & 23 & 270 & \makecell[r]{30, 40, 120} & 15 & \makecell[r]{0.03, 1.17, 5.00} & \makecell[r]{0.53, 2.93, 12.17 } & No bugs & \makecell[r]{0.08, 0.84, 1.34 } & FC, SAC   \\\cline{1-2} \cline{4-7}  \cline{9-10}
\textbf{nv\_small} & 23  & overall & \makecell[r]{30, 40, 120} & 15 & \makecell[r]{0.03, 0.07, 0.15 } & \makecell[r]{0.35, 1.03, 10.18 } & expected & \makecell[r]{0.03, 0.11, 0.45 } & FC, SAC \\\hline
\textbf{HLSCNN} & 2 &  60 & \makecell[r]{3, 3, 3} & 5 & \multicolumn{2}{c|}{Sub-acc. of batch size > 1}  & \makecell[r]{2.33, 2.33, 2.33}  & \makecell[r]{0.45, 0.45, 0.45} & RB, SAC    \\\cline{1-5} \cline{8-10}
\textbf{FlexNLP} & 9  & 60 & \makecell[r]{3, 3, 3} & 5 & \multicolumn{2}{c|}{} & \makecell[r]{4.11, 10.77, 20.20 }  & timeout & RB  \\\cline{1-5} \cline{8-10}
\textbf{Dataflow} & 1 &  N/A & N/A & 2 & \multicolumn{2}{c|}{not found}  & \makecell[r]{0.25, 0.25, 0.25}  & timeout & RB  \\\cline{1-5} \cline{8-10}
\textbf{Opticalflow} &  1 & 7 & \makecell[r]{10, 10, 10} & 6 &  \multicolumn{2}{c|}{} & \makecell[r]{0.17, 0.17, 0.17} & timeout & RB  \\\hline
\end{tabular}
\caption{Comparison with Verification Techniques used by Designers.}
\label{table:summary_results:conv}

\end{table*}

In this section, we present the statistics of the designs used in our experiments (Table~\ref{table:summary_design}) and comparison of our technique with conventional verification techniques (Table~\ref{table:summary_results:conv}).

In Table~\ref{table:summary_design}, we report the number of versions of each design, its source code type, \textbf{P}hysical \textbf{L}ines \textbf{O}f \textbf{C}ode, flip-flop counts (\textbf{FFs}) and gate counts (\textbf{Gates}). The average across different versions of a design are reported for each of the \textbf{Size} statistics. PLOC is calculated on the original source code using SLOCCount~\cite{slocc}. FFs and Gates are obtained from Cadence JasperGold reports when the RTL of the design (if design is expressed in a language other than RTL, the design is HLS synthesized using the HLS tools used by the designers) is elaborated in JasperGold.

In Table~\ref{table:summary_results:conv}, we compare our technique with conventional simulation based verification which the designers have used to verify each of these designs. The Setup \textbf{Effort} is reported in person-days (\textbf{PD}), which excludes CPU runtime. We see a $\sim$ 5X improvement in setup effort on the average with a $\sim$ 9X improvement for the large, industrial NVDLA designs. Avg. runtimes for \aqedd result from dividing the time to detect all bugs by the number of bugs. A verification running for over 12 hours was deemed \textbf{timeout}.  Avg. runtimes for conventional simulation result from dividing the runtime for each simulation trace by the total number of simulation traces. Some simulation statistics are not reported by the designers (\textbf{N/A}). \textbf{keypair}~\cite{pqc2019}, \textbf{gsm}~\cite{hara2008chstone}, \textbf{HLSCNN}~\cite{whatmough201916nm}, \textbf{FlexNLP}~\cite{Tambe_isscc2021}, \textbf{Dataflow}~\cite{chi2019rapid}, and \textbf{Opticalflow}~\cite{zhou2018rosetta} do not feature any sub-accelerators with a batch size greater than one. One OOB bug was found in \textbf{gsm} and one initialization bug was found in \textbf{keypair}. dRB Results are omitted for \textbf{nv\_large} and \textbf{nv\_small}; RB bugs result from parallelism and pipelining, both of which were lost in our manual translation of NVDLA from Verilog to C code. The bug type at the full design level was identified from understanding the bug after detection.

\end{appendices}

%%%%%%%%%%%%%%%%%%%%%%%%%%%%%%%%%%%%%%%%%%%%%%%%%%%%%%%%%
%%%%%%%%%%%%%%%%%%%%%%%%%%%%%%%%%%%%%%%%%%%%%%%%%%%%%%%%%

% that's all folks
\end{document}